\newtheorem{definition}{Definition}
\newtheorem{theorem}{Theorem}
\newtheorem{lemma}{Lemma}
\newcommand{\etal}{\textit{et~al.}}
\newcommand{\pair}[2]{\langle #1, #2 \rangle}
\newcommand{\ST}{{\textit{\textbf{S}}}}  
\newcommand{\TT}{{\textit{\textbf{T}}}}  
\newcommand{\PP}{{\textit{\textbf{P}}}}  
\newcommand{\ZZ}[2]{\textit{Z}_{#1,#2}}
\newcommand{\ser}[1]{\textit{serial}(#1)}
\newcommand{\WIT}[1]{\mathit{WIT\!}_{#1}}
\newcommand{\Prev}[1]{\mathit{Lmax}_{#1}}
\newcommand{\Next}[1]{\mathit{Lmin}_{#1}}
\newcommand{\imax}{i_{\mathit{max}}}
\newcommand{\imin}{i_{\mathit{min}}}
\title{Duel and sweep algorithm for order-preserving pattern matching}
\author[1]{Davaajav Jargalsaikhan}
\author[1]{Diptarama}
\author[1]{Ryo Yoshinaka}
\author[1]{Ayumi Shinohara}
\affil[1]{Graduate School of Information Sciences, Tohoku University\\
	6-6-05 Aramaki Aza Aoba, Aoba-ku, Sendai, Japan\\
	\texttt{\{davaajav@shino., diptarama@shino., ry@, ayumi@\}ecei.tohoku.ac.jp}}
\begin{document}

\maketitle

\begin{abstract}

Given a text $T$ and a pattern $P$ over alphabet $\Sigma$,
the classic exact matching problem searches for all occurrences of pattern $P$ in text $T$.
Unlike exact matching problem, \emph{order-preserving pattern matching} (OPPM) considers the relative order of elements, rather than their real values.
In this paper, we propose an efficient algorithm for OPPM problem using the ``duel-and-sweep'' paradigm.
Our algorithm runs in $O(n + m\log m)$ time in general and
$O(n + m)$ time under an assumption that the characters in a string can be sorted in linear time with respect to the string size.
We also perform experiments and show that our algorithm is faster that KMP-based algorithm.
Last, we introduce the two-dimensional order preserved pattern matching
and give a duel and sweep algorithm that runs in $O(n^2)$ time for duel stage and $O(n^2 m)$ time for sweeping time
with $O(m^3)$ preprocessing time.

\end{abstract}


\section{Introduction}

The exact string matching problem is one of the most widely studied problems. 
Given a text and a pattern, the exact matching problem searches for all occurrences positions of pattern in the text.
Motivated by low level image processing, the two-dimensional exact matching problem has been extensively studied in recent decades.
Given a text $T$ of size $n \times n$ and a pattern $P$ of size $m \times m$ over alphabet $\Sigma$ of size $\sigma = |\Sigma|$, the exact matching problem on two-dimensional strings searches for all occurrence positions of $P$ in $T$.
Bird~\cite{bird1977two} and Baker~\cite{baker1978technique} proposed two-dimensional exact matching using dictionary matching algorithm and Amir and Farach~\cite{amir1992two} proposed an algorithm that uses suffix trees.
These algorithms require total ordering from the alphabet and run in $O (n ^ 2 \log \sigma)$ time with $O (m ^ 2 \log \sigma)$ preprocessing time.
Amir~\etal~\cite{amir1994alphabet} also proposed alphabet independent approach to the problem that runs in $O(m^2 \log \sigma)$ preprocessing time and $O(n^2)$ matching time.

Unlike the exact matching problem, \emph{order-preserving pattern matching} (OPPM) considers the relative order of elements, rather than their real values.
Order-preserving matching has gained much interest in recent years, due to its applicability in problems where the relative order is compared, rather than the exact value, such as share prices in stock markets, weather data or musical notes.

Kubica~\etal~\cite{kubica2013linear} and Kim~\etal~\cite{kim2014order} proposed a solution based on KMP algorithm.
These algorithms address the one-dimensional OPPM problem and have time complexity of $O (n + m \log m)$.
Cho~\etal~\cite{cho2015fast} brought forward another algorithm based on the Horspool's algorithm that uses $q$-grams, which was proven to be experimentally fast.
Crochemore~\etal~\cite{SPIRE_Crochemore_2013} proposed data structures for OPPM.
On the other hand, Chhabra and Tarhio~\cite{SEA_Chhabra_2014}, Faro and K\"{u}lekci~\cite{faro2016efficient} proposed filtration methods which practically fast.
Moreover, faster filtration algorithms by using SIMD (Single Instruction Multiple Data) instructions
were proposed by Cantone~\etal~\cite{ref:PSC_Cantone}, Chhabra~\etal~\cite{ref:PSC_Chhabra} and Ueki~\etal~\cite{ueki2016fast}.
They showed that SIMD instructions are efficient in speeding up their algorithms.

In this paper, we propose an algorithm that based on dueling technique~\cite{vishkin1985optimal} for OPPM.
Our algorithm runs in $O(n + m\log m)$ time which is as fast as KMP based algorithm.
Moreover, we perform experiments those compare the performance of our algorithm with the KMP-based algorithm.
The experiment results show that our algorithm is faster that KMP-based algorithm.
Last, we introduce the two-dimensional order preserved pattern matching
and give a duel and sweep algorithm that runs in $O(n^2)$ time for duel stage and $O(n^2 m)$ time for sweeping time
with $O(m^3)$ preprocessing time.
To the best of our knowledge, our solution is the first to address the two-dimensional order preserving patern matching problem.

The rest of the paper is organized as follows. In Section~\ref{sec:prelim}, we give preliminaries on the problem.
In Section~\ref{sec:one dimension}, we describe the algorithm for OPPM problem.
In Section~\ref{sec:experiment} we will show some experiment results those compare the performance of our algorithm with the KMP-based algorithm.
In Section~\ref{sec:two dimension}, we extend the algorithm and describe the method for the two-dimensional OPPM problem.
In Section~\ref{sec:conclusion}, we conclude our work and discuss future work. 
\section{Preliminaries} \label{sec:prelim}

We use $\Sigma$ to denote an alphabet of integer symbols such that the comparison of any two symbols can be done in constant time. $\Sigma^*$ denotes the set of strings over the alphabet $\Sigma$.
For a string $S \in \Sigma^*$, we will denote $i$-th element of $S$ by $S[i]$ and a substring of $S$ that starts at the location $i$ and ends at the location $j$ as $S[i\!:\!j]$. 
We say that two strings $S$ and $T$ of equal length $n$ are \emph{order-isomorphic}, written $S \approx T$, if $S[i] \leq S[j] \Longleftrightarrow T[i] \leq T[j]$ for all $1 \leq i, j \leq n$.
For instance, $(12, 35, 5) \approx (25, 30, 21) \not\approx (11, 13, 20)$.

In order to check order-isomorphism of two strings, Kubica~\etal~\cite{kubica2013linear} introduced~\footnote{Similar arrays $\textit{Prev}_S$ and $\textit{Next}_S$ are introduced in~\cite{hasan2015order}.} useful arrays $\Prev{S}$ and $\Next{S}$ defined by
\begin{align}
	\Prev{S}[i]=j \mbox{ if } S[j]=\max_{k < i} \{S[k] \mid  S[k] \le S[i] \}, \\
	\Next{S}[i]=j \mbox{ if } S[j]=\min_{k < i} \{S[k] \mid  S[k] \ge S[i] \}.
\end{align}
We use the rightmost (largest) $j$ if there exist more than one such $j$. 
If there is no such $j$ then we define $\Next{S}[i] = 0$ and $\Prev{S}[i] = 0$, respectively.
From the definition, we can easily observe the following properties.
\begin{align}
	S[\Prev{S}[i]] = S[i]\quad \Longleftrightarrow\quad S[i] = S[\Next{s}[i]], \label{prop:lmaxmineq} \\
	S[\Prev{S}[i]] < S[i]\quad \Longleftrightarrow\quad S[i] < S[\Next{s}[i]]. \label{prop:lmaxminineq}
\end{align}
\begin{lemma}[\cite{kubica2013linear}]
	For a string $S$, let $sort(S)$ be the time required to sort the elements of $S$.
	$\Prev{S}$ and $\Next{S}$ can be computed in $O(sort(S) + |S|)$ time.
\end{lemma}
Thus, $\Prev{S}$ and $\Next{S}$ can be computed in $O(|S|\log |S|)$ time in general.
Moreover, the computation can be done in $O(|S|)$ time 
under a natural assumption~\cite{kubica2013linear} that the characters of $S$ are elements of the set $\{1,\ldots,|S|^{O(1)}\}$.
By using $\Prev{S}$ and $\Next{S}$, order-isomorphism of two strings can be decided as follow.
\begin{lemma}[\cite{cho2015fast}]\label{lem:op}
	For two strings $S$ and $T$ of length $n$,
	assume that $S[1\!:\!i] \approx T[1\!:\!i]$ for some $i < n$. 
	Let $\imax = \Prev{S}[i+1]$ and $\imin = \Next{S}[i+1]$.
	Then $S[1\!:\!i+1] \approx T[1\!:\!i+1]$ if and only if either of the following two conditions holds.
	\begin{align}
	S[\imax] = S[i+1] = S[\imin]\ \wedge\ T[\imax] = T[i+1] = T[\imin], \label{eq:opeq}\\
	S[\imax] < S[i+1] < S[\imin]\ \wedge\ T[\imax] < T[i+1] < T[\imin]. \label{eq:opineq}
	\end{align}
	We omit the corresponding equalities/inequalities if $\imax=0$ or $\imin=0$.
\end{lemma}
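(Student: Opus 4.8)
The plan is to prove both directions by a careful case analysis that reduces order-isomorphism of the length-$(i+1)$ prefixes to order-isomorphism of the length-$i$ prefixes (which we already have by hypothesis) together with the correct placement of the new element $S[i+1]$ relative to the elements before it. The key observation is that $S[1\!:\!i+1] \approx T[1\!:\!i+1]$ holds if and only if, for every $k \le i$, the three-way comparison between $S[k]$ and $S[i+1]$ agrees with the three-way comparison between $T[k]$ and $T[i+1]$ (the comparisons among the first $i$ elements are already consistent by assumption). So the whole content of the lemma is that checking this condition for \emph{all} $k \le i$ is equivalent to checking it just for the two distinguished indices $\imax$ and $\imin$.

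First I would establish the easy direction: if $S[1\!:\!i+1]\approx T[1\!:\!i+1]$, then in particular the order relations at positions $\imax$, $\imin$, and $i+1$ are preserved in $T$. By definition of $\Prev{S}$ we have $S[\imax] \le S[i+1]$, and by~\eqref{prop:lmaxmineq}--\eqref{prop:lmaxminineq} either $S[\imax]=S[i+1]=S[\imin]$ or $S[\imax]<S[i+1]<S[\imin]$; order-isomorphism then forces the matching (in)equalities on the $T$ side, giving~\eqref{eq:opeq} or~\eqref{eq:opineq}. (When $\imax=0$ or $\imin=0$ the corresponding bound is simply absent, as stated.)

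For the converse I would assume~\eqref{eq:opeq} or~\eqref{eq:opineq} holds and show $S[k]$ vs.\ $S[i+1]$ matches $T[k]$ vs.\ $T[i+1]$ for every $k\le i$. Fix such a $k$ and split on the sign of $S[k]-S[i+1]$. If $S[k]\le S[i+1]$, then by maximality in the definition of $\Prev{S}$ we get $S[k]\le S[\imax]\le S[i+1]$; combined with the hypothesis on the $S$ side this pins down whether $S[k]=S[i+1]$ or $S[k]<S[i+1]$, and since $S[1\!:\!i]\approx T[1\!:\!i]$ relates $S[k]$ to $S[\imax]$ and the assumed condition~\eqref{eq:opeq}/\eqref{eq:opineq} relates $T[\imax]$ to $T[i+1]$, the same relation transfers to the $T$ side. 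The case $S[k]\ge S[i+1]$ is symmetric using $\Next{S}$ and $\imin$. A subtlety to handle is the ``rightmost $j$'' tie-breaking rule and the possibility $k=\imax$ or $k=\imin$ themselves, but these are consistent because~\eqref{prop:lmaxmineq} guarantees $S[\imax]=S[i+1]$ exactly when $S[\imin]=S[i+1]$.

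The main obstacle is the boundary handling when $\imax=0$ or $\imin=0$: here one of the chains of inequalities is truncated, and I must argue that if, say, $\imax=0$ then $S[i+1]$ is strictly smaller than every $S[k]$ with $k\le i$ that is $\le S[i+1]$ — i.e.\ there is no such $k$ except possibly equal ones, so the relevant constraint is vacuous or collapses to the $\imin$ side — and that the analogous statement holds in $T$ precisely because $S[1\!:\!i]\approx T[1\!:\!i]$. Making this degenerate bookkeeping airtight, rather than the generic case, is where the care is needed.
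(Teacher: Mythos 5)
The paper does not prove this lemma itself; it is imported from Cho~\etal~\cite{cho2015fast} as a black box, so there is no in-paper proof to compare against. Your argument is the standard (and correct) one: reduce $S[1\!:\!i+1]\approx T[1\!:\!i+1]$ to checking, for each $k\le i$, that the three-way comparison of $S[k]$ with $S[i+1]$ transfers to $T$, and then use the maximality of $S[\imax]$ (resp.\ minimality of $S[\imin]$) together with $S[1\!:\!i]\approx T[1\!:\!i]$ to sandwich $T[k]$ via $T[\imax]$ (resp.\ $T[\imin]$). The only blemish is in your degenerate-case discussion: if $\imax=0$ then by definition there is \emph{no} $k\le i$ with $S[k]\le S[i+1]$ at all (equality included, since equality satisfies $\le$), so that case of the converse is genuinely vacuous rather than ``vacuous except possibly for equal ones''; also note that condition~(\ref{eq:opeq}) can only arise with both $\imax\neq 0$ and $\imin\neq 0$, by property~(\ref{prop:lmaxmineq}). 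With that bookkeeping tightened, the proof goes through.
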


Hasan~\etal~\cite{hasan2015order} proposed a modification to Z-function, which Gusfield~\cite{gusfield1997algorithms} defined for ordinal pattern matching, to make it useful from the order-preserving point of view. 
For a string $S$, the \emph{(modified) Z-array} of $S$ is defined by
\[Z_S[i] = \max_{1 \le j \le |S| - i + 1}\{j \mid S[1:j]\approx S[i:i+j-1] \} \mbox{\quad for each } 1 \leq i \leq |S|.\]
In other words, $Z_S[i]$ is the length of the longest substring of $S$ that starts at position $i$ and is order-isomorphic with some prefix of $S$. An example of Z-array is illustrated in Table \ref{table:Z-array}.

\begin{table}[t]
	\centering
		\caption{Z-array of a string $S = (18, 22, 12, 50, 10, 17)$.
			For instance, $Z_S[3] = 3$ because $S[1\!:\!3] = (18, 22, 12) \approx (12, 50, 10) = S[3\!:\!5]$ and $S[1\!:\!4] = (18, 22, 12, 50) \not\approx (12, 50, 10, 17) = S[3\!:\!6]$. $\Prev{S}$ and $\Next{S}$ are also shown.　} 
		\label{table:Z-array}
		\setlength{\tabcolsep}{8pt}
		\begin{tabular}{|c|c|c|c|c|c|c|}
			\hline
			    & $1$ & $2$ & $3$ & $4$ & $5$ & $6$ \\
			\hline
			$S$ & $18$ & $22$ & $12$ & $50$ & $10$ & $17$ \\
			\hline
			$Z_S$ & $6$ & $1$ & $3$ & $1$ & $2$ & $1$ \\
			\hline
			$\Prev{S}$ & $0$ & $1$ & $0$ & $2$ & $0$ & $3$ \\
			\hline
			$\Next{S}$ & $0$ & $0$ & $1$ & $0$ & $3$ & $1$ \\
			\hline
		\end{tabular}
\end{table}

\begin{lemma} (\cite{hasan2015order}) \label{lemma:Z-array} 
	For a string $S$, Z-array $Z_S$ can be computed in $O(|S|)$ time, assuming that $\Prev{S}$ and $\Next{S}$ are already computed.
\end{lemma}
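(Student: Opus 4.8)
The plan is to adapt Gusfield's linear-time Z-algorithm to the order-preserving setting, replacing every character comparison by one invocation of Lemma~\ref{lem:op}. As the algorithm scans $i = 2, \dots, |S|$ it maintains a ``Z-box'' $[l,r]$: the interval with the largest right endpoint $r$ encountered so far for which $S[l\!:\!r] \approx S[1\!:\!r-l+1]$ (initialised empty, with $r=1$), and it sets $Z_S[1]=|S|$ as the base case. For each $i$ there are the usual cases. If $i>r$, I start a fresh match at length $j=0$ and repeatedly try to grow it: under the loop invariant $S[1\!:\!j]\approx S[i\!:\!i+j-1]$, Lemma~\ref{lem:op} decides in $O(1)$ time whether $S[1\!:\!j+1]\approx S[i\!:\!i+j]$ (with the prefix playing the role of ``$S$'' in that lemma and the window $S[i\!:\!i+j]$ playing the role of ``$T$'', so that position $p$ of the window is $S[i+p-1]$, and the entries $\Prev{}[j+1],\Next{}[j+1]$ of the prefix coincide with $\Prev{S}[j+1],\Next{S}[j+1]$, which are available by hypothesis); I increment $j$ while it succeeds, stop at the first failure, set $Z_S[i]=j$, and update $[l,r]\leftarrow[i,i+j-1]$ when $j\ge1$. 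If $i\le r$, let $k=i-l+1$; since $S[l\!:\!r]\approx S[1\!:\!r-l+1]$ restricts to $S[i\!:\!r]\approx S[k\!:\!r-l+1]$, I either copy $Z_S[i]=Z_S[k]$ when $Z_S[k]<r-i+1$, or, when $Z_S[k]\ge r-i+1$, observe that $S[1\!:\!r-i+1]\approx S[i\!:\!r]$ and resume the extension loop above from $j=r-i+1$, updating $[l,r]$ afterwards.

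For correctness I would first record two elementary facts: order-isomorphism is transitive, and it is preserved when both strings are restricted to a common set of indices. These justify the derived isomorphisms $S[i\!:\!r]\approx S[k\!:\!r-l+1]$ and, in the copy case, $S[i\!:\!i+Z_S[k]-1]\approx S[1\!:\!Z_S[k]]$ (using $i+Z_S[k]\le r$). The copy value is then shown exact exactly as classically: $Z_S[i]\ge Z_S[k]$ by transitivity, and $Z_S[i]>Z_S[k]$ would, via $S[i\!:\!i+Z_S[k]]\approx S[k\!:\!k+Z_S[k]]$ and one more transitivity step, contradict the maximality defining $Z_S[k]$. The extension loop is correct because Lemma~\ref{lem:op} maintains precisely its invariant $S[1\!:\!j]\approx S[i\!:\!i+j-1]$, and the index at which it halts equals $Z_S[i]$ by the definition of $Z_S$; when the loop is entered with $i>r$ the invariant holds vacuously at $j=0$, and when it is resumed at $j=r-i+1$ the invariant is exactly the isomorphism $S[1\!:\!r-i+1]\approx S[i\!:\!r]$ derived above.

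For the running time I would give the standard amortised argument: each invocation of Lemma~\ref{lem:op} costs $O(1)$, each outer iteration $i$ performs at most one \emph{failing} invocation, and every \emph{successful} invocation strictly advances $r$. Concretely, $r$ is non-decreasing and bounded by $|S|$, and in every iteration the number of successful invocations does not exceed the increase of $r$ in that iteration (when $i>r$, $r$ jumps from a value below $i$ up to $i+j-1$; when $i\le r$ and we extend, $r$ grows by exactly the number of new successful steps). Hence there are $O(|S|)$ successful and $O(|S|)$ failing invocations in total, for $O(|S|)$ overall. The step that needs the most care — and the main obstacle relative to the exact-matching proof — is the bookkeeping linking window positions to prefix positions when Lemma~\ref{lem:op} is applied inside a Z-box: one must check that the required $\Prev{}/\Next{}$ entries of the relevant prefixes are genuinely those of $S$, and, crucially, that resuming ``from $j=r-i+1$'' starts from an actual order-isomorphism $S[1\!:\!r-i+1]\approx S[i\!:\!r]$ rather than merely from positionwise equality of symbols as in the classical algorithm.
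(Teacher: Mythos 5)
Your proposal is correct and is essentially the intended argument: the paper does not prove this lemma itself but cites Hasan~\etal, remarking only that Lemma~\ref{lem:op} lets one drop their distinct-characters assumption, and your proof is precisely that construction worked out in full (Gusfield's Z-box scheme with each character comparison replaced by an $O(1)$ application of Lemma~\ref{lem:op}, plus the transitivity/restriction facts needed to justify the copy and resume cases). The details you flag as delicate --- that $\Prev{S},\Next{S}$ of a prefix coincide with those of $S$, and that resumption starts from a genuine order-isomorphism --- are handled correctly.
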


Note that in their original work, Hasan \etal~\cite{hasan2015order} assumed that each character in $S$ is distinct.
However, we can extend their algorithm by using Lemma~\ref{lem:op} to verify order-isomorphism even when $S$ contains duplicate characters.


\section{One-dimensional order-preserving matching} \label{sec:one dimension}

In this section, we will propose an algorithm for one-dimensional OPPM using the ``duel-and-sweep" paradigm~\cite{amir1994alphabet}. 
In the dueling stage, all possible pairs of candidates ``duel'' with each other.
The surviving candidates are further pruned during the sweeping stage, leaving the candidates that are order-isomorphic with the pattern. 
Prior to the dueling stage, the pattern is preprocessed to construct a \emph{witness table} that contains \emph{witness pairs} for all possible offsets.

\begin{definition} [1d-OPPM problem]
The one-dimensional order-preserving matching problem is defined as follows,
\begin{description}[topsep=0pt,parsep=0pt,partopsep=0pt]
	\item[Input:] A text $T \in \Sigma^*$ of length $n$ and a pattern $P \in \Sigma^*$ of length $m$,
	\item[Output:] All occurrences of substrings of $T$ that are order-isomorphic with $P$.
\end{description}
\end{definition}

\subsection{Pattern preprocessing}

Let $a > 0$ be an integer such that when $P$ is superimposed on itself with the offset $a$, the overlap regions are not order-isomorphic. We say that a pair $\pair{i}{j}$ of locations is  \emph{a witness pair for the offset $a$} if either of the following holds:
\begin{itemize}[topsep=3pt]
	\item $P[i] = P[j] \text{ and } P[i+a] \ne P[j+a]$,
	\item $P[i] > P[j] \text{ and } P[i+a] \le P[j+a]$,
	\item $P[i] < P[j] \text{ and } P[i+a] \ge P[j+a]$.
\end{itemize}
Next, we describe how to construct a \emph{witness table} for $P$, that stores witness pairs for all possible offsets $a$ $(0 < a < m)$.
For the one-dimensional problem, the witness table $\WIT{P}$ is an array of length $m-1$, such that $\WIT{P}[a]$ is a witness pair for offset $a$.
In the case when there are multiple witness pairs for offset $a$, we take the pair $\pair{i}{j}$ with the smallest value of $j$ and $i < j$.
When the overlap regions are order-isomorphic for offset $a$, which implies that no witness pair exists for $a$, we express it as $\WIT{P}[a] = \pair{m+1}{m+1}$. 

\begin{lemma} \label{lemma:Z_p computation}
	For a pattern $P$ of length $m$, we can construct $\WIT{P}$ in $O(m)$ time assuming that $Z_P$ is already computed.
\end{lemma}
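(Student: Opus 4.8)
The plan is to show how the $Z_P$ array lets us recover a witness pair for every offset $a$ with $1 \le a < m$ in amortized constant time. Recall that $Z_P[a+1]$ is the length of the longest prefix of $P$ that is order-isomorphic with the substring $P[a+1 : a+Z_P[a+1]]$. Translating this to superimposing $P$ on itself with offset $a$: the overlap region has length $m-a$, and $P[1:m-a]$ is order-isomorphic with $P[a+1:m]$ exactly when $Z_P[a+1] \ge m-a$. In that case no witness pair exists and we set $\WIT{P}[a] = \pair{m+1}{m+1}$. Otherwise, letting $\ell = Z_P[a+1] < m-a$, we know $P[1:\ell] \approx P[a+1:a+\ell]$ but $P[1:\ell+1] \not\approx P[a+1:a+\ell+1]$, so the position $\ell+1$ is precisely where the order-isomorphism first breaks, and a witness pair for offset $a$ must involve the index $\ell+1$ (on the shifted copy, index $a+\ell+1$).

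Next I would pin down exactly which pair $\pair{i}{j}$ to record, using Lemma~\ref{lem:op}. Apply the lemma with $S = P$, $T = P[a+1:m]$ (so $T[k] = P[a+k]$), and $i = \ell$. Since $P[1:\ell] \approx P[a+1:a+\ell]$ but the isomorphism fails at step $\ell+1$, both displayed conditions \eqref{eq:opeq} and \eqref{eq:opineq} of Lemma~\ref{lem:op} must fail. Set $\imax = \Prev{P}[\ell+1]$ and $\imin = \Next{P}[\ell+1]$. A short case analysis on the relation between $P[\ell+1]$ and its neighbours $P[\imax]$, $P[\imin]$ (governed by properties~\eqref{prop:lmaxmineq} and~\eqref{prop:lmaxminineq}) shows that either $\pair{\imax}{\ell+1}$ or $\pair{\imin}{\ell+1}$ (after reordering the two components so the smaller index comes first, and after shifting indices appropriately for the copy) is a witness pair for offset $a$: whichever of the two neighbour relations is preserved in $P$ but violated in the shifted copy. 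Because $\imax, \imin < \ell+1 \le m-a$, the chosen index $j = \ell+1$ is at most $m-a$, hence $i+a$ and $j+a$ are both valid positions in $P$.

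I would then address the book-keeping needed to match the tie-breaking rule in the statement, namely that $\WIT{P}[a]$ should be the witness pair with the smallest second component $j$ (with $i < j$). The above construction already yields a pair whose larger index is $\ell+1 = Z_P[a+1]+1$; I would argue this is in fact the smallest possible $j$ for any witness pair at offset $a$, since $P[1 : Z_P[a+1]] \approx P[a+1 : a+Z_P[a+1]]$ means no pair of indices both $\le Z_P[a+1]$ can witness a failure. Finally, the running time: computing each $\WIT{P}[a]$ requires one lookup in $Z_P$, two lookups in $\Prev{P}$ and $\Next{P}$, and a constant-size case analysis, so the whole table is built in $O(m)$ time once $Z_P$ is available (and $\Prev{P}, \Next{P}$, which by Lemma~1 and the preceding discussion are computed within the same budget).

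The main obstacle I anticipate is the case analysis tying the failure of the order-isomorphism at position $\ell+1$ to a concrete witness pair while correctly handling the degenerate cases $\imax = 0$ or $\imin = 0$ (where the corresponding comparison is simply absent), and handling duplicate values in $P$ so that the three-way definition of a witness pair (equality turning to inequality, and each strict inequality turning to its negation) is covered exhaustively. Getting the index shifts right between $P$ and its offset-$a$ copy, and verifying the $i < j$ ordering after possibly swapping $\imax$ or $\imin$ with $\ell+1$, is the other fiddly point, but it is routine once the correspondence with Lemma~\ref{lem:op} is set up.
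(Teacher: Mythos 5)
Your proposal is correct and follows essentially the same route as the paper's proof: use $Z_P[a+1]$ to detect the first position $\ell+1$ where the overlap's order-isomorphism breaks, then invoke Lemma~\ref{lem:op} with $\Prev{P}[\ell+1]$ and $\Prev{P}$'s counterpart $\Next{P}[\ell+1]$ to extract a witness pair in constant time per offset. The only addition beyond the paper's argument is your explicit justification of the minimality of $j$, which the paper leaves implicit.
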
 
\begin{proof}
Remind that $Z_P[k]$ is the length of the longest prefix of $P[k\!:\!m]$ that is order-isomorphic with a prefix of $P$. 
For each $1 < k < m$, we have two cases.
\begin{description}[topsep=5pt,itemsep=3pt]
	\item[Case 1] $Z_P[k] = m - k + 1$ : Since $P[1\!:\!m - k + 1] \approx P[k\!:\!m]$, there is no witness pair for offset $k - 1$.
	\item[Case 2] $Z_P[k] < m - k + 1$ : 
	Let $j_k = Z_P[k] + 1$, $\imax = \Prev{P}[j_k]$, and $\imin = \Next{P}[j_k]$.
	Then $P[1\!:\!j_k-1]\approx P[k\!:\!k+j_k-2]$ and $P[1\!:\!j_k]\not \approx P[k\!:\!k+j_k-1]$, by the definition of $Z_P[k]$.
	By Lemma~\ref{lem:op}, neither condition~(\ref{eq:opeq}) nor (\ref{eq:opineq}) holds.
	If ${P[\imax]} = P[j_k]$ then $P[j_k] = P[\imin]$ by property~(\ref{prop:lmaxmineq}), so that
	\begin{align}
		P[k + \imax - 1] \ne P[k+j_k-1]\ \vee\ P[k+j_k-1] \ne P[k + \imin - 1] \label{cond:eq}
	\end{align}
	holds by condition~(\ref{eq:opeq}). Otherwise, i.e. ${P[\imax]} < P[j_k]$, we have $P[j_k] < P[\imin]$ by property~(\ref{prop:lmaxmineq}), so that
	\begin{align}
		P[k + \imax - 1] \ge P[k+j_k-1]\ \vee P[k+j_k-1]\ \ge P[k + \imin - 1] \label{cond:ineq}
	\end{align}
	holds by condition~(\ref{eq:opineq}).
	Therefore, $\pair{\imax}{j_k}$ is a witness pair if the leftside of condition~(\ref{cond:eq}) or (\ref{cond:ineq}) holds,
	and $\pair{\imin}{j_k}$ is a witness pair if rightside of condition~(\ref{cond:eq}) or (\ref{cond:ineq}) holds.
\end{description} 
Algorithm~\ref{alg:Witness} describes the procedure.
Clearly it runs in $O(m)$ time.
\end{proof}

\subsection{Dueling stage}

A substring of $T$ of length $m$ will be referred to as a \emph{candidate}. A candidate that starts at the location $x$ will be denoted by $T_x$.
Witness pairs are useful in the following situation. 
Let $T_x$ and $T_{x+a}$ be two overlapping candidates and $\pair{i}{j}$ be the witness pair for offset $a$.
Without loss of generality, we assume that $P[i] < P[j]$ and $P[i+a] > P[j + a]$. 
\begin{itemize}[topsep=3pt]
	\item If $T[x + a + i -1] > T[x + a + j -1]$, then $T_x \not\approx P$.
	\item If $T[x + a + i -1] < T[x + a + j -1]$, then $T_{x+a} \not\approx P$.
\end{itemize}
Based on this information, we can safely eliminate either candidate $T_x$ or $T_{x+a}$ without looking into other locations. This process is called \emph{dueling}. The procedure for the dueling is described in the Algorithm \ref{alg:Dueling}.

\begin{algorithm2e}[tbp]
	\Fn(\tcc*[h]{Construct the witness table $\WIT{P}$}){\Witness{}}{
		compute the Z-array $Z_P$ for the pattern $P$\;
		\For{$k=2$ \KwTo $m-1$}{
			$j = Z_P[k] + 1$\;
			\lIf{$j = m - k + 1$}{
					$\WIT{P}[k - 1] = \pair{m + 1}{m + 1}$}
				\ElseIf{$P[\Next{P}[j]] = P[j] = P[\Prev{P}[j]]$}{
				\If{$P[k+j-1] \ne P[k + \Prev{P}[j] - 1]$}{
					$\WIT{P}[k-1] = \pair{\Prev{P}[j]}{j}$\;
				}
				\lElse{
					$\WIT{P}[k-1] = \pair{\Next{P}[j]}{j}$}
			}\Else{
				
				\If{$P[k+j-1] \le P[k + \Prev{P}[j] - 1]$}{
					$\WIT{P}[k-1] = \pair{\Prev{P}[j]}{j}$\;
				}
				\lElse{
					$\WIT{P}[k-1] = \pair{\Next{P}[j]}{j}$}
			}
		}
	}
	\caption{Algorithm for constructing the witness table $\WIT{P}$}
	\label{alg:Witness}
\end{algorithm2e}

Next, we prove that the \emph{consistency} property is transitive.
Suppose $T_x$ and $T_{x+a}$ are two overlapping candidates.
We say that $T_x$ and $T_{x+a}$ are \emph{consistent} with respect to $P$ if $P[1\!:\!m-a] \approx P[a+1\!:\!m]$.
Candidates that do not overlap are trivially consistent.

\begin{lemma} \label{lemma:consistency 1d}
	For any $a$ and $a'$ such that $0 < a < a + a' < m$, let us consider three candidates $T_{x}$, $T_{x + a}$, and $T_{x + a + a'}$. If $T_x$ is consistent with $T_{x+a}$ and $T_{x+a}$ is consistent with $T_{x + a + a'}$, then $T_x$ is consistent with $T_{x + a+ a'}$.
\end{lemma}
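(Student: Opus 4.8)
The plan is to work directly from the definitions, reducing everything to the characterization of order-isomorphism in terms of the relation $\le$ on pairs of indices. Note first that ``consistency'' of two overlapping candidates $T_y$ and $T_{y+c}$ is a condition on $P$ alone, namely $P[1\!:\!m-c] \approx P[c+1\!:\!m]$; the starting positions in the text are irrelevant. So the hypotheses give $P[1\!:\!m-a] \approx P[a+1\!:\!m]$ (from $T_x$ consistent with $T_{x+a}$) and $P[1\!:\!m-a'] \approx P[a'+1\!:\!m]$ (from $T_{x+a}$ consistent with $T_{x+a+a'}$, whose offset is $a'$), and the goal is $P[1\!:\!m-a-a'] \approx P[a+a'+1\!:\!m]$.

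Fix indices $i,j$ with $1 \le i,j \le m-a-a'$. Unfolding $S \approx T$ as ``$S[k]\le S[l] \Longleftrightarrow T[k]\le T[l]$ for all admissible $k,l$'', apply the first hypothesis with the index pair $(i,j)$, which is legal since $i,j \le m-a-a' \le m-a$, to obtain $P[i]\le P[j] \Longleftrightarrow P[a+i]\le P[a+j]$. Then apply the second hypothesis with the index pair $(a+i,\,a+j)$, which is legal since $a+i,a+j \le a+(m-a-a') = m-a'$, to obtain $P[a+i]\le P[a+j] \Longleftrightarrow P[(a+a')+i]\le P[(a+a')+j]$. Chaining the two equivalences yields $P[i]\le P[j] \Longleftrightarrow P[(a+a')+i]\le P[(a+a')+j]$ for every such $i,j$, which is precisely $P[1\!:\!m-(a+a')] \approx P[(a+a')+1\!:\!m]$, i.e.\ $T_x$ is consistent with $T_{x+a+a'}$. (If $a+a' \ge m$ the two candidates do not overlap and are consistent by convention, so that case needs no separate treatment; the standing assumption $a+a' < m$ covers the rest.)

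The argument is short, and the only point requiring care is verifying that the index substitutions remain within the overlap regions of lengths $m-a$ and $m-a'$ respectively — which is exactly where the hypothesis $a+a' < m$ enters. I do not anticipate a genuine obstacle: the content is simply that $\approx$ is preserved under composing the two ``shift maps'', and transitivity of $\Longleftrightarrow$ finishes the proof. I would present it essentially as above, perhaps with the chain of equivalences displayed on one line for readability.
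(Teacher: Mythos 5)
Your proof is correct and follows essentially the same route as the paper: both arguments chain the two shift isomorphisms through the middle occurrence, the paper by restricting $P[1\!:\!m-a]\approx P[a+1\!:\!m]$ and $P[1\!:\!m-a']\approx P[a'+1\!:\!m]$ to the common sub-intervals and invoking transitivity of $\approx$, you by unfolding $\approx$ into pairwise $\le$-equivalences and chaining those directly. The index-range checks you carry out are exactly the content of the paper's substring restrictions, so the two proofs differ only in level of abstraction.
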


\begin{proof}
	Since $T_x$ is consistent with $T_{x+a}$, it follows that $P[1\!:\!m - a] \approx P[a+1\!:\!m]$, so that $P[a' + 1\!:\!m - a] \approx P[(a + a')+1\!:\!m]$.
	Moreover, since $T_{x+a}$ is consistent with $T_{x+a +a'}$, it follows that $P[1\!:\!m - a'] \approx P[a'+1\!:\!m]$, so that $P[1\!:\!m - a' - a] \approx P[a'+1\!:\!m - a]$.
	Thus, $P[1\!:\!m - (a + a')] \approx P[(a + a') + 1\!:\!m]$, which implies that $T_x$ is consistent with $T_{x+a+a'}$.
\end{proof}

During the dueling stage, the candidates are eliminated until all remaining candidates are pairwise consistent.
For that purpose, we can apply the dueling algorithm due to Amir~\etal~\cite{amir1994alphabet} developed for ordinal pattern matching.

\begin{lemma}[\cite{amir1994alphabet}] \label{lemma:duel 1D}
The dueling stage can be done in $O(n^2)$ time by using $\WIT{\PP}$.
\end{lemma}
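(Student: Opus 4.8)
The plan is to recall the stack-based elimination scheme of Amir~\etal~\cite{amir1994alphabet} and argue its correctness and running time in the order-preserving setting. First I would isolate the basic fact behind a single duel: if $\pair{i}{j}=\WIT{\PP}[a]$ is a genuine witness pair for offset $a$ (that is, $\WIT{\PP}[a]\ne\pair{m+1}{m+1}$), then for two overlapping candidates $T_x$ and $T_{x+a}$ the order relation between $T[x+a+i-1]$ and $T[x+a+j-1]$ must agree with the relation between $\PP[i]$ and $\PP[j]$ whenever $T_{x+a}\approx\PP$, and with the relation between $\PP[i+a]$ and $\PP[j+a]$ whenever $T_x\approx\PP$. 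Since a witness pair is by definition a place where those two pattern relations disagree, at most one of $T_x,T_{x+a}$ can be order-isomorphic to $\PP$, and a single comparison of the two text symbols certifies at least one of the two candidates as not order-isomorphic to $\PP$, which may then be discarded. This is a short case analysis over the three shapes of a witness pair and is the only place where the concrete values stored in $\WIT{\PP}$ are used.

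Next I would describe the elimination itself. Scanning the $O(n)$ candidates $T_x$ in increasing order of $x$, maintain a list (stack) $\mathcal{S}$ of \emph{survivors} that are pairwise consistent with respect to $\PP$. When a new candidate $T_y$ is considered, repeatedly compare it with the current top $T_z$ of $\mathcal{S}$ using offset $a=y-z$: if $\WIT{\PP}[a]=\pair{m+1}{m+1}$ then $T_z$ and $T_y$ are already consistent and we push $T_y$; otherwise we duel, popping $T_z$ and continuing if $T_z$ is eliminated, or discarding $T_y$ and stopping if $T_y$ is eliminated. Correctness of comparing only against the top rests on Lemma~\ref{lemma:consistency 1d}: the survivors in $\mathcal{S}$ are kept pairwise consistent, so once $T_y$ is found consistent with the top $T_z$, transitivity of consistency makes it consistent with every candidate below $T_z$, hence with all of $\mathcal{S}$. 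Every candidate removed during a duel has been certified not order-isomorphic to $\PP$ by the fact above, so no true occurrence is ever discarded, and when the scan ends $\mathcal{S}$ is a pairwise-consistent set that still contains every occurrence of $\PP$ in $T$.

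For the running time, a witness-table lookup and a duel each cost $O(1)$ once $\WIT{\PP}$ has been built (Lemma~\ref{lemma:Z_p computation}); every candidate is pushed onto $\mathcal{S}$ at most once and popped at most once, and each pass through the inner loop either pushes $T_y$, pops some $T_z$, or terminates the loop, so the total number of duels is $O(n)$. Thus the dueling stage runs in $O(n)$ time, which in particular lies within the claimed $O(n^2)$ bound; indeed even the crude count of at most $\binom{n}{2}$ candidate pairs, each dueled in constant time, already gives $O(n^2)$. I expect the only genuine subtlety to be making the ``duel only against the top'' shortcut airtight, i.e.\ invoking Lemma~\ref{lemma:consistency 1d} to propagate local consistency with the top into global consistency with all survivors; the per-duel symbol comparison and the amortized push/pop accounting are routine.
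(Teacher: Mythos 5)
The paper gives no proof of this lemma: it is quoted from Amir~\etal~\cite{amir1994alphabet}, and the surrounding text merely says that their dueling algorithm ``can be applied.'' Your reconstruction is exactly that algorithm, transported to the order-preserving setting, and it is correct. The one-duel fact is sound: writing $R_1$ for the relation (one of $<,=,>$) between $P[i]$ and $P[j]$ and $R_2$ for that between $P[i+a]$ and $P[j+a]$, the definition of a witness pair forces $R_1\ne R_2$, the single text comparison yields a relation $R_T$ that would have to equal $R_1$ if $T_{x+a}\approx P$ and $R_2$ if $T_x\approx P$, so at least one candidate is certified dead; and the stack invariant is correctly propagated via Lemma~\ref{lemma:consistency 1d}, since consistency here depends only on the offset. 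Two remarks. First, your proof actually establishes the sharper bound $O(n)$ for the dueling stage; the $O(n^2)$ in the statement appears to be carried over from the two-dimensional version (where the text has $n^2$ symbols), and the linear bound you prove is in fact what Theorem~1's claimed $O(n+m\log m)$ total requires, so your argument is the one the paper implicitly needs. Second, a minor point of hygiene: when a duel certifies both candidates as non-matching (which can happen when $R_T$ disagrees with both $R_1$ and $R_2$), eliminating only one is still correct, but it is worth saying explicitly that the survivor of a duel is merely ``not yet refuted,'' not ``verified'' --- that is precisely why the sweeping stage is still needed afterwards.
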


\begin{algorithm2e}[t]
	\Fn(\tcc*[h]{Duel between candidates $T_x$ and $T_{x+a}$}){\Dueling{$T_x, T_{x + a}$}}{
		$\pair{i}{j} = \WIT{P}[a]$\;
		\If{$P[i] = P[j]$}{
			\leIf{$T[x + a + i -1] \ne T[x + a + j -1]$}{
				\bf{return} $T_{x+a}$\;
			}{
				\bf{return} $T_{x}$}
		}
		\If{$P[i] < P[j]$}{
			\leIf{$T[x + a + i -1] > T[x + a + j -1]$}{
				\bf{return} $T_{x+a}$\;
			}{
				\bf{return} $T_{x}$}
		} 
		\If{$P[i] > P[j]$}{
			\leIf{$T[x + a + i -1] < T[x + a + j -1]$}{
				\bf{return} $T_{x+a}$\;
			}{
				\bf{return} $T_{x}$}
		} 
	}
	\caption{Dueling}
	\label{alg:Dueling}
\end{algorithm2e}

\subsection{Sweeping stage}\label{sec:1DSweeping}
The goal of the sweeping stage is to prune candidates until all remaining candidates are order-isomorphic with the pattern.
Suppose that we need to check whether some surviving candidate $T_x$ is order-isomorphic with the pattern $P$.
It suffices to successively check the conditions~(\ref{cond:eq}) and (\ref{cond:ineq}) in Lemma~\ref{lem:op}, starting from the leftmost location in $T_x$.
If the conditions are satisfied for all locations in $T_x$, 
then $T_x \approx P$.
Otherwise,
$T_x \not\approx P$, and obtain a mismatch position $j$.

A naive implementation of the sweeping will result in $O(n^2)$ time.
However, if we take advantage of the fact that all the remaining candidates are pairwise consistent,
we can reduce the time complexity to $O(n)$ time.
Since the remaining candidates are consistent to each other, for the overlapping candidates $T_x$ and $T_{x + a}$,
the overlap region is checked only once if $T_x$ is order-isomorphic with the pattern $P$.
Otherwise, for a mismatch position $j$,
$T_{x+a}$ should be checked from position $j - a + 1$ of $T_{x+a}$,
because $P[a:j-1] \approx T_x[a\!:\!j-1] \approx T_{x+a}[1\!:\!j-a]$.
Algorithm~\ref{alg:1DSweep} describes the procedure for the sweeping stage.

\begin{algorithm2e}[t]
	\Fn{\SweepingStage{}}{
		\While{there are unchecked candidates to the right of $T_x$}{
			let $T_x$ be the leftmost unchecked candidate\;
			\eIf{there are no candidates overlapping with $T_x$}{
				\lIf{$T_x \not \approx P$}{
					eliminate $T_x$}
			}{
				let $T_{x+a}$ be the leftmost candidate that overlaps with $T_x$\;
				\lIf{$T_x \approx P$}{
					start checking $T_{x+a}$ from the location $m - a + 1$}
				\Else{
					let $j$ be the mismatch position\;
					eliminate $T_x$\;
					start checking $T_{x+a}$ from the location $j-a$\;
				}
			}
		}
	}
	\caption{The sweeping stage algorithm}
	\label{alg:1DSweep}
\end{algorithm2e}

\begin{lemma} \label{lemma:Sweep 1D}
	The sweeping stage can be completed in $O(n)$ time.
\end{lemma}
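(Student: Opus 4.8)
The plan is to prove an amortized $O(n)$ bound on the total number of character comparisons, together with $O(1)$ bookkeeping per candidate (and there are at most $n-m+1$ candidates). Recall how a single surviving candidate $T_x$ is tested for order-isomorphism with $P$: by Lemma~\ref{lem:op} we check conditions~(\ref{cond:eq})/(\ref{cond:ineq}) for successive local indices $k = \ell+1, \ell+2, \dots$, starting from a prefix length $\ell$ that is $0$ when $T_x$ is examined ``from scratch'' and otherwise is inherited from the preceding candidate. Testing local index $k$ reads $T$ at absolute positions $x+k-1$, $x+\Prev{P}[k]-1$ and $x+\Next{P}[k]-1$; since $\Prev{P}[k]<k$ and $\Next{P}[k]<k$, only $x+k-1$ is a \emph{fresh} position, while the two lookback reads fall inside the already-verified prefix. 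The check terminates at $k=m$ (success, $T_x\approx P$) or at the first mismatching index $k=j$ (failure).

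The heart of the argument is the invariant that the scan never moves backwards. Define the \emph{frontier} of the check of $T_x$ to be $\phi = x+\ell$, the absolute position of the first character it will examine. I claim that when the algorithm passes from $T_x$ to the next processed candidate $T_{x'}$ with frontier $\phi'$, we have $\phi' \ge \phi$, and the fresh positions read while checking $T_x$ form an interval whose right endpoint is at most $\phi'$ (and it equals $\phi'$ only in the mismatch case). There are three cases, matching Algorithm~\ref{alg:1DSweep}. \emph{(i)} $T_x$ overlaps $T_{x+a}$ and $T_x\approx P$: the first $m-a$ characters of $T_{x+a}$ are $T[x+a\!:\!x+m-1]$, which are already known to be order-isomorphic with $P[1\!:\!m-a]$, since $P[1\!:\!m-a]\approx P[a+1\!:\!m]$ by consistency (Lemma~\ref{lemma:consistency 1d}) and $T_x\approx P$; hence we resume $T_{x+a}$ at local index $m-a+1$, i.e.\ at absolute position $(x+a)+(m-a+1)-1 = x+m$, strictly past the last character $x+m-1$ of $T_x$. \emph{(ii)} $T_x$ overlaps $T_{x+a}$ and $T_x\not\approx P$ with mismatch at local index $j$: then $T_{x+a}[1\!:\!j-a-1] = T_x[a+1\!:\!j-1] \approx P[a+1\!:\!j-1] \approx P[1\!:\!j-a-1]$, using $T_x[1\!:\!j-1]\approx P[1\!:\!j-1]$ and consistency once more, so we may resume $T_{x+a}$ at local index $j-a$, whose absolute position is $(x+a)+(j-a)-1 = x+j-1$ --- exactly the mismatch position just read, so $\phi'=x+j-1\ge\phi$ and only that single position is re-examined. \emph{(iii)} No candidate overlaps $T_x$: $T_x$ is checked fully (or to a mismatch), reading only positions in $[x,x+m-1]$, and the next candidate starts at some $x'\ge x+m$ and is checked from scratch, so $\phi'=x'\ge x+m$.

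Now partition the candidate sequence into maximal \emph{chains} of successively overlapping candidates, where each chain's last candidate falls under case \emph{(iii)} and its successor chain begins from scratch. Within one chain the frontier is non-decreasing by cases \emph{(i)}--\emph{(ii)}, and each step's fresh reads consume a fresh stretch of text (case \emph{(ii)} reusing at most its rightmost position), so a chain whose candidates collectively span text positions $[\alpha,\beta]$ incurs $O(\beta-\alpha+1)$ fresh reads, plus $O(1)$ backward lookups and bookkeeping per candidate. Moreover, if $T_y$ is the last candidate of a chain, its successor chain begins from scratch at $x'\ge y+m$, so that chain's examined region $[x',\dots]$ is disjoint from the current chain's region, which lies within $[\alpha, y+m-1]$. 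Hence all examined text intervals across all chains are pairwise disjoint subintervals of $[1,n]$, and the grand total of comparisons is $O(n) + O(\#\text{candidates}) = O(n)$, proving the lemma.

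The main obstacle is precisely the case analysis \emph{(i)}--\emph{(ii)}: getting the resume-index arithmetic right and showing that the position from which $T_{x+a}$ is resumed never lags behind the region already scanned in $T_x$, which is what makes the amortized sum telescope to $O(n)$. This is where pairwise consistency of the surviving candidates (Lemma~\ref{lemma:consistency 1d}) is indispensable, since it is exactly what licenses skipping the already-verified overlap of $T_{x+a}$. A secondary care point is the accounting for the constant number of backward reads at positions $\Prev{P}[k]$ and $\Next{P}[k]$ required by Lemma~\ref{lem:op}; these do not advance the frontier, but since each verified index triggers only $O(1)$ of them, they contribute only a constant factor.
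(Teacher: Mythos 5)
Your proof is correct and follows essentially the same approach as the paper, which only sketches this argument informally (resume $T_{x+a}$ at local index $j-a$ after a mismatch at $j$, or at $m-a+1$ after a full match, justified by pairwise consistency). Your frontier/chain accounting is a more careful elaboration of the paper's claim that each text position is examined $O(1)$ times, and your resume-index arithmetic actually matches Algorithm~\ref{alg:1DSweep} more precisely than the paper's prose does.
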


By Lemmas~\ref{lemma:Z_p computation}, \ref{lemma:duel 1D}, and \ref{lemma:Sweep 1D}, we summarize this section as follows.
\begin{theorem}
	The duel-and-sweep algorithm solves 1d-OPPM Problem in $O(n + m\log m)$ time.
	Moreover, the running time is $O(n+m)$ under the natural assumption that
	the characters of $P$ can be sorted in $O(m)$ time.
\end{theorem}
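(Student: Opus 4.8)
The plan is to obtain the theorem as a corollary, by accounting separately for the cost and the correctness of the three phases of the algorithm --- pattern preprocessing, dueling, and sweeping --- and then summing the costs.

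First I would handle the preprocessing, which touches only $P$ and therefore contributes no term in $n$. We compute $\Prev{P}$ and $\Next{P}$; by the cited result of Kubica~\etal~\cite{kubica2013linear} this takes $O(\ser{P}+m)$ time, i.e.\ $O(m\log m)$ in general and $O(m)$ when the characters of $P$ can be sorted in linear time. From these arrays, $Z_P$ is obtained in $O(m)$ by Lemma~\ref{lemma:Z-array}, and then $\WIT{P}$ in $O(m)$ by Lemma~\ref{lemma:Z_p computation}. Hence the whole pattern side runs in $O(m\log m)$ time, or $O(m)$ under the sorting assumption.

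Second, the dueling stage. There are $n-m+1=O(n)$ candidates, and two candidates are compared only when they overlap. Processing the candidates left to right with a stack discipline, each newly considered candidate is dueled against the current top while the two are inconsistent; each duel costs $O(1)$ via $\WIT{P}$ (Algorithm~\ref{alg:Dueling}) and removes one candidate, so the number of duels over the whole stage is $O(n)$ and the stage runs in $O(n)$ time. Its correctness --- that the surviving candidates are pairwise consistent --- follows because the witness-pair (dueling) property guarantees that of any two inconsistent overlapping candidates at least one is eliminated, while the transitivity of consistency (Lemma~\ref{lemma:consistency 1d}) lets us conclude pairwise consistency from having resolved only the pairs that become stack-adjacent.

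Third, the sweeping stage runs in $O(n)$ time by Lemma~\ref{lemma:Sweep 1D}, and it is correct: for each examined candidate it checks the conditions of Lemma~\ref{lem:op} position by position, so it reports $T_x$ exactly when $T_x\approx P$, and pairwise consistency of the survivors justifies restarting an overlapping candidate past its already-verified prefix without losing or spuriously creating an occurrence. Adding the three contributions gives $O(m\log m)+O(n)+O(n)=O(n+m\log m)$, and $O(n+m)$ under the linear-sorting assumption, proving the theorem. The only genuinely delicate point is the second phase: naive dueling of all overlapping pairs would be superlinear in $n$, so one must combine the left-to-right stack discipline with Lemma~\ref{lemma:consistency 1d} to bring the dueling stage down to $O(n)$; once that is in place, the rest is a routine assembly of the earlier lemmas.
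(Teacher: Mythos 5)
Your proposal is correct and takes essentially the same route as the paper, which obtains the theorem by summing the costs of the three phases: computing $\Prev{P}$, $\Next{P}$, $Z_P$ and $\WIT{P}$ in $O(m\log m)$ time (or $O(m)$ under the sorting assumption), then dueling and sweeping in $O(n)$ time each. Your only substantive addition is spelling out the stack-based $O(n)$ dueling argument via Lemma~\ref{lemma:consistency 1d}, which the paper delegates to Amir~\etal; this is welcome, since Lemma~\ref{lemma:duel 1D} as printed claims $O(n^2)$ --- evidently a slip carried over from the two-dimensional setting --- and the $O(n)$ bound you derive is what the theorem actually requires.
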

\section{Experiment} \label{sec:experiment}

\begin{figure}[t]
	\centering
	\begin{minipage}[t]{0.49\hsize}
		\centering
		\includegraphics[scale=0.47]{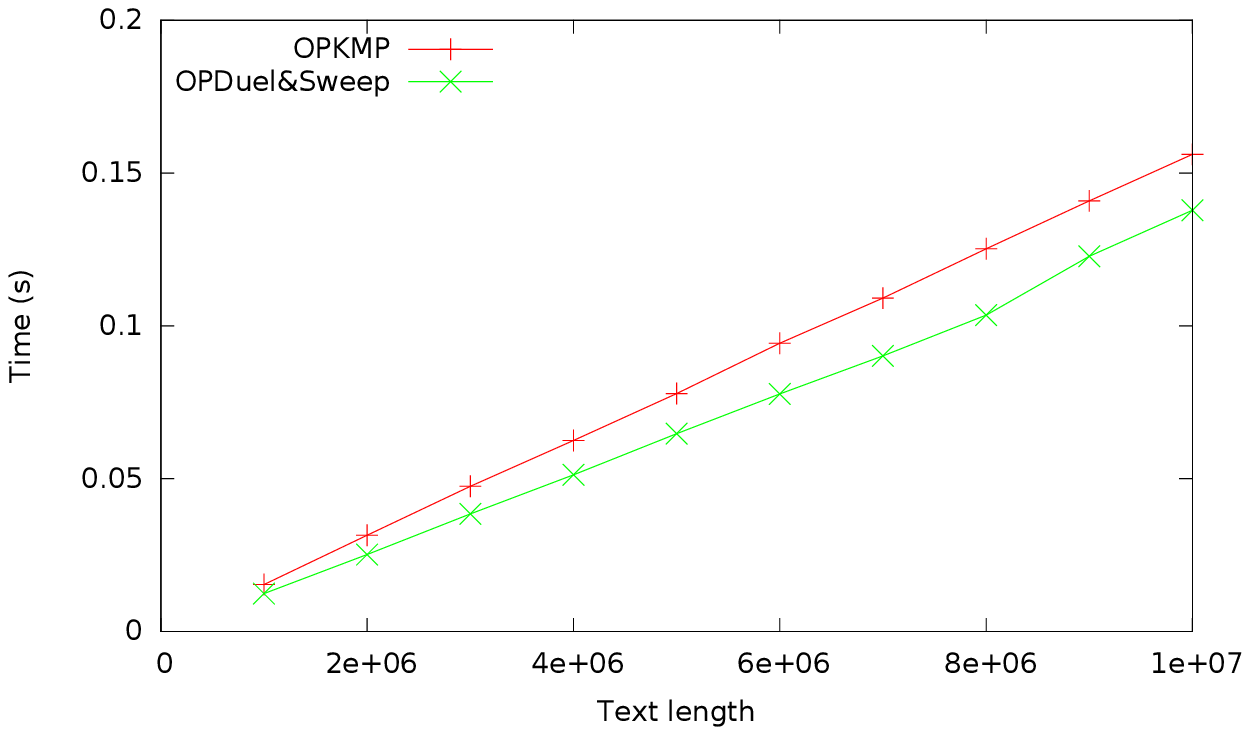}
		\ \ \ \scriptsize{(a)}
	\end{minipage}
	\begin{minipage}[t]{0.49\hsize}
		\centering
		\includegraphics[scale=0.47]{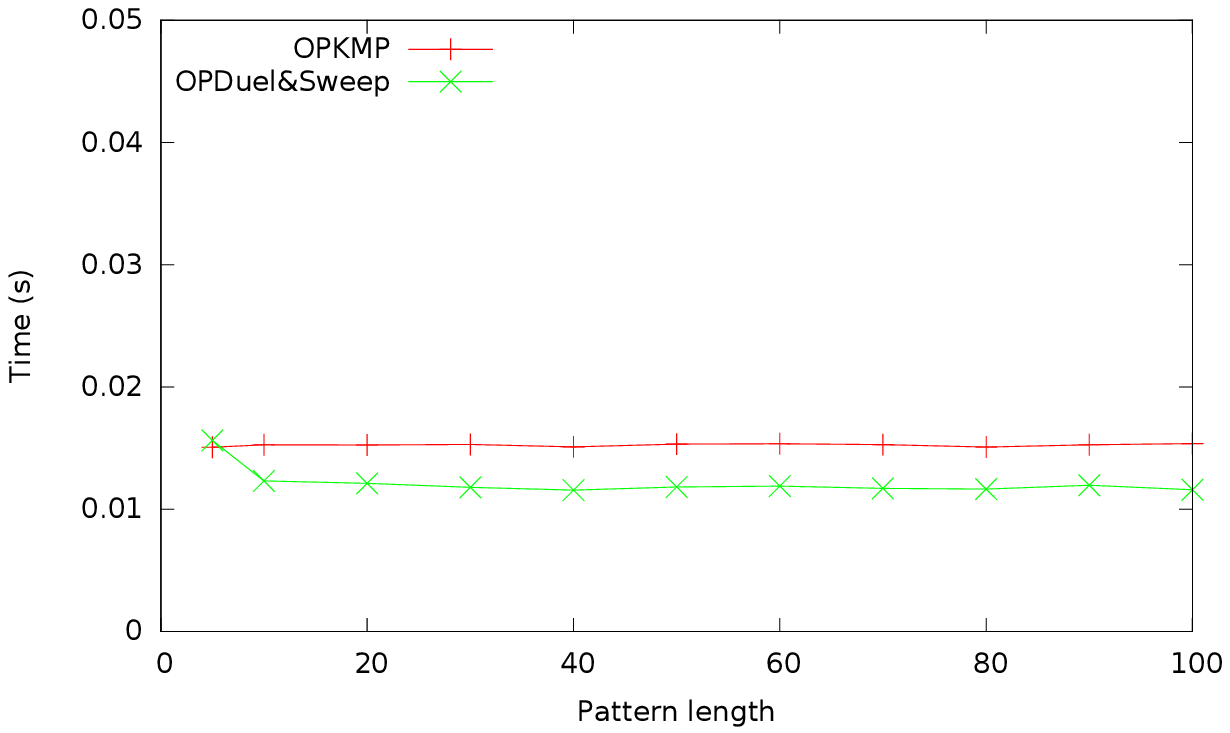}
		\ \ \ \scriptsize{(b)}
	\end{minipage}
	\caption{Running time of the algorithms with respect to (a)~text length, and (b)~pattern length.}
	\label{fig:ex_time}
\end{figure}

\begin{figure}[t]
	\centering
	\begin{minipage}[t]{0.49\hsize}
		\centering
		\includegraphics[scale=0.47]{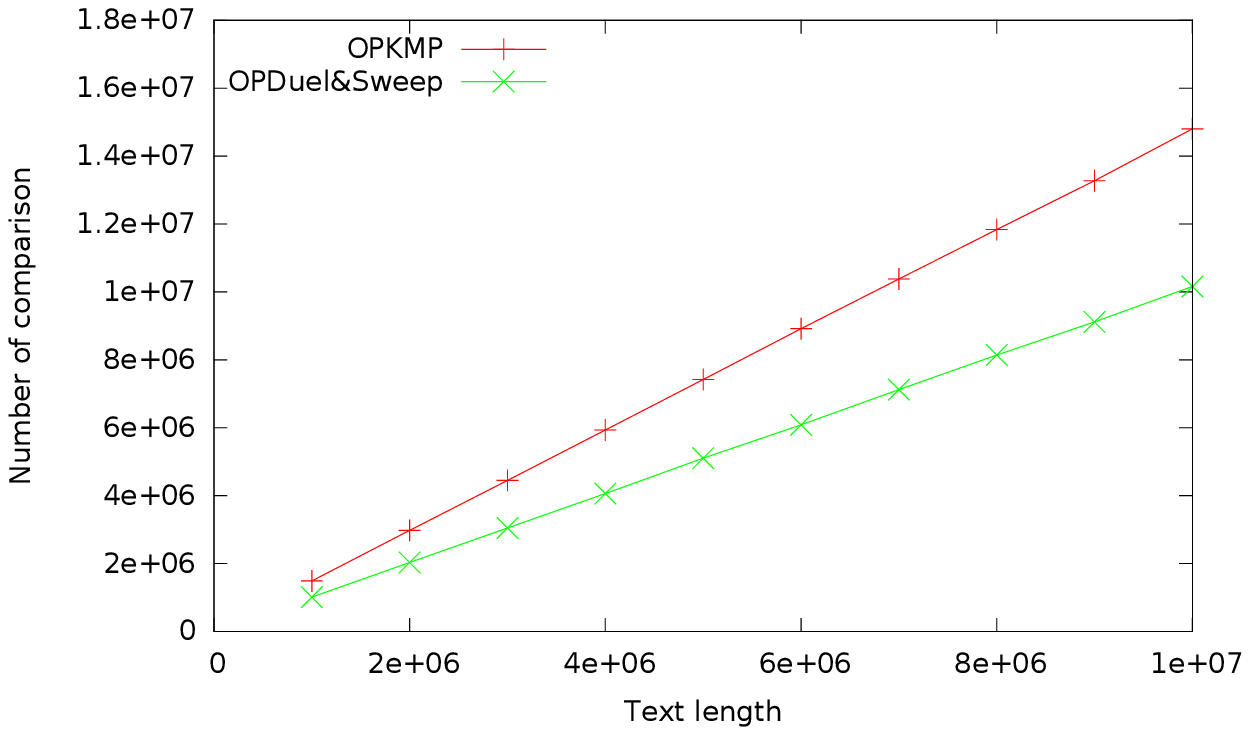}
		\ \ \ \scriptsize{(a)}
	\end{minipage}
	\begin{minipage}[t]{0.49\hsize}
		\centering
		\includegraphics[scale=0.47]{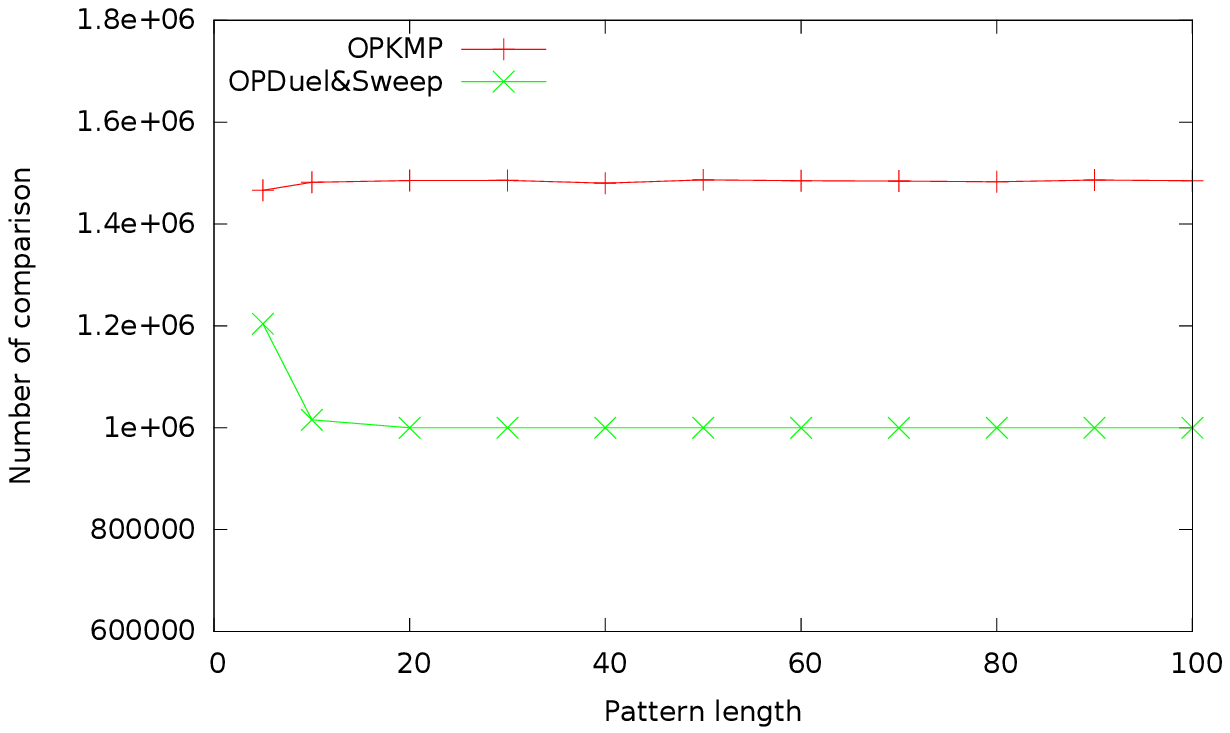}
		\ \ \ \scriptsize{(b)}
	\end{minipage}
	\caption{Number of comparisons in the algorithms with respect to (a)~text length, and (b)~pattern length.}
	\label{fig:ex_comp}
\end{figure}

In order to compare the performance of proposed algorithm with the KMP-based algorithm, we conducted experiments on 1d-OPPM problem.
We performed two sets of experiments.
In the first experiment, the pattern size $m$ is fixed to $10$, while the text size $n$ is changed from $100000$ to $1000000$.
In the second experiment, the text size $n$ is fixed to $1000000$ while the pattern size $m$ is changed from $m$ $5$ to $100$.
We measured the average of running time and the number of comparisons for $50$ repetitions on each experiment.
We used randomly generated texts and patterns with alphabet size $|\Sigma|=1000$.
Experiments are executed on a machine with Intel Xeon CPU E5-2609 8 cores 2.40 GHz, 256 GB memory, and Debian Wheezy operating system.

The results of our preliminary experiments are shown in Fig.~\ref{fig:ex_time} and Fig.~\ref{fig:ex_comp}.
We can see that our algorithm is better that KMP based algorithm in running time and number of comparison when the pattern size and text size are large.
However, our algorithm is worse when the pattern size is small, less than $10$.

\section{Two-dimensional order preserving pattern matching} \label{sec:two dimension}

In this section, we will discuss how to perform two-dimensional order preserving pattern matching (2d-OPPM).
Array indexing is used for two-dimensional strings, the horizontal coordinate $x$ increases from left to right and the vertical coordinate $y$ increases from top to bottom. $\ST[x,y]$ denotes an element of $\ST$ at position $(x,y)$ and $\ST[x\!:\!x + w - 1 , y\!:\!y + h - 1]$ denotes a substring of $\ST$ of size $w \times h$ with top-left corner at the position $(x, y)$.

We say that two dimensional strings $\ST$ and $\TT$ are \emph{order-isomorphic}, written $\ST \approx \TT$, if $\ST[i_x,i_y] \leq \ST[j_x,j_y] \Longleftrightarrow \TT[i_x,i_y] \leq \TT[j_x,j_y]$ for all $1 \leq i_x, j_x \leq w$ and $1 \leq i_y, j_y \leq h$.
For a simple presentation, we assume that both text and pattern are squares $(w=h)$ in this paper, but we can generalize it straightforwardly.
\begin{definition} [2d-OPPM problem]
	The two-dimensional order-preserving matching problem is defined as follows,
	\begin{description}[topsep=0pt,parsep=0pt,partopsep=0pt]
		\item[Input:] A text $\TT$ of size $n \times n$ and a pattern $\PP$ of size $m \times m$,
		\item[Output:] All occurrences of substrings of $\TT$ that are order-isomorphic with $\PP$.
	\end{description}
\end{definition}

Our approach is to reduce 2d-OPPM problem into 1d-OPPM problem, based on the following observation.
For two-dimensional string $\ST$, let $\ser{\ST}$ be a (one-dimensional) string which \emph{serializing} $\ST$ by traversing it
in the left-to-right/top-to-bottom order. We can easily verify the following lemma.

\begin{lemma} \label{lemma:serialized}
	$\ST \approx \TT$ if and only if $\ser{\ST} \approx \ser{\TT}$ for any $\ST$ and $\TT$.
\end{lemma}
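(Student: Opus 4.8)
The plan is to observe that serialization is a value-preserving bijection between the cell positions of a two-dimensional string and the index set $\{1,\dots,wh\}$ of the associated one-dimensional string, so that both notions of order-isomorphism become literally the same statement read through that bijection. Throughout, $\ST$ and $\TT$ are assumed to have the same dimensions $w\times h$ (the lemma is stated for general rectangles, and the square case used later is a special case).

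First I would fix notation for the serialization map. For a $w\times h$ array, define $\phi(x,y)=(y-1)w+x$ for $1\le x\le w$ and $1\le y\le h$. This $\phi$ is a bijection from cell positions onto $\{1,\dots,wh\}$, and by the definition of left-to-right/top-to-bottom traversal we have $\ser{\ST}[\phi(x,y)]=\ST[x,y]$ for every cell $(x,y)$, and likewise $\ser{\TT}[\phi(x,y)]=\TT[x,y]$. The key point to record explicitly is that $\phi$ depends only on the common dimensions $w,h$ and not on the entries, so it is the same reindexing map for both $\ST$ and $\TT$.

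Next I would prove the two implications, each being just a relabelling. For the ``only if'' direction, assume $\ST\approx\TT$ and take arbitrary indices $1\le p,q\le wh$; write $p=\phi(i_x,i_y)$ and $q=\phi(j_x,j_y)$. Then
\begin{align*}
\ser{\ST}[p]\le\ser{\ST}[q] &\iff \ST[i_x,i_y]\le\ST[j_x,j_y]\\
&\iff \TT[i_x,i_y]\le\TT[j_x,j_y] \iff \ser{\TT}[p]\le\ser{\TT}[q],
\end{align*}
where the middle equivalence is the hypothesis; hence $\ser{\ST}\approx\ser{\TT}$. For the ``if'' direction, assume $\ser{\ST}\approx\ser{\TT}$ and take arbitrary cell positions $(i_x,i_y)$ and $(j_x,j_y)$; setting $p=\phi(i_x,i_y)$ and $q=\phi(j_x,j_y)$ and running the same chain of equivalences backwards yields $\ST[i_x,i_y]\le\ST[j_x,j_y]\iff\TT[i_x,i_y]\le\TT[j_x,j_y]$, i.e.\ $\ST\approx\TT$.

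There is no genuine obstacle in this proof; the only thing requiring care is verifying that $\phi$ is a bijection onto the \emph{entire} index range $\{1,\dots,wh\}$ and is independent of the array contents, so that the universal quantifier ``for all $p,q$'' in the one-dimensional definition of $\approx$ corresponds exactly to ``for all pairs of cells'' in the two-dimensional definition. Everything else is a direct unfolding of the two definitions of order-isomorphism.
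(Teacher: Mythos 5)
Your proof is correct; the paper itself offers no proof, merely remarking that the lemma is easily verified, and your argument via the content-independent bijection $\phi(x,y)=(y-1)w+x$ between cells and serialized indices is exactly the intended routine verification.
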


\begin{theorem} \label{thm:2dOPPM}
	2d-OPPM problem can be solved in $O(n^2 m + m^2 \log{m})$. 
\end{theorem}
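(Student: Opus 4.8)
The plan is to adapt the one-dimensional duel-and-sweep machinery to two dimensions, using Lemma~\ref{lemma:serialized} to reduce order-isomorphism checks on $m\times m$ blocks to order-isomorphism checks on serialized strings of length $m^2$. First I would set up the preprocessing: for each nonzero offset $(a,b)$ with $0 < |a|, b < m$ (only $O(m^2)$ relevant offsets), determine whether the overlap of $\PP$ with its shift by $(a,b)$ is order-isomorphic, and if not, record a two-dimensional witness pair, i.e.\ a pair of cells on which the two copies disagree in relative order. A direct computation of one such entry takes $O(m^2)$ time by comparing the serialized overlap regions via $\Prev{}$, $\Next{}$ and Lemma~\ref{lem:op}, giving $O(m^4)$ in total; to reach the claimed $O(m^2\log m)$ preprocessing I would instead compute, for each of the $O(m)$ vertical-then-diagonal families of offsets, a one-dimensional $Z$-style array on an appropriately serialized string of length $O(m^2)$ (using Lemmas~\ref{lemma:Z-array} and \ref{lemma:Z_p computation} to extract witnesses in linear time), so the whole witness table costs $O(m^2\log m)$ (or $O(m^2)$ under the linear-sort assumption).

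Next I would carry out the dueling stage on the $O(n^2)$ candidate positions. Two overlapping candidates $\TT_{(x,y)}$ and $\TT_{(x+a,y+b)}$ duel exactly as in Algorithm~\ref{alg:Dueling}: look up the witness pair for offset $(a,b)$, read the two text cells it points to, and eliminate the candidate whose relative order contradicts $\PP$. The two-dimensional analogue of Lemma~\ref{lemma:consistency 1d} — consistency of overlapping candidates is transitive — follows from the same prefix/suffix argument applied to the serialized overlap regions, so the pattern-of-elimination argument of Amir~\etal~\cite{amir1994alphabet} applies row by row (and then across rows) and leaves a set of pairwise-consistent surviving candidates after $O(n^2)$ dueling operations, each $O(1)$ time.

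Finally, the sweeping stage verifies each surviving candidate against $\PP$ by checking the cells of $\ser{\TT_{(x,y)}}$ against $\ser{\PP}$ one at a time using Lemma~\ref{lem:op}, which costs $O(m^2)$ per candidate verified from scratch. Here, unlike the one-dimensional case, the shift structure is two-dimensional, so I would not claim the full linear-in-$n^2$ speedup; processing the $O(n^2)$ candidates naively gives $O(n^2 m^2)$, but by sweeping column-group by column-group and reusing the already-verified overlap between horizontally consecutive consistent candidates (exactly as in Algorithm~\ref{alg:1DSweep}, where a mismatch at serialized position $j$ lets the next candidate resume at position $j$ shifted by the offset), the amortized cost drops to $O(m)$ per column step per row, i.e.\ $O(n^2 m)$ total for the sweep. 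Adding the $O(n^2)$ dueling cost and the $O(m^2\log m)$ preprocessing yields the bound $O(n^2 m + m^2\log m)$. The main obstacle is the sweeping analysis: in two dimensions the "overlap already checked" region between consistent candidates is an $L$-shaped or rectangular band rather than a prefix, so I must argue carefully that each text cell is (re)examined only a bounded number of times across the whole sweep — this is where a charging argument against the serialization order, together with pairwise consistency, does the real work.
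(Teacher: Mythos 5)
Your proposal takes a genuinely different—and much heavier—route than the paper's proof of this theorem. The paper does not use dueling or two-dimensional witness tables here at all: it simply observes (via Lemma~\ref{lemma:serialized}) that for each fixed starting column $x$, the occurrences of $\PP$ in the band $\TT[x\!:\!x+m-1,1\!:\!n]$ correspond to occurrences of $\ser{\PP}$ (length $m^2$) at positions of the form $m(y-1)+1$ in $S_x=\ser{\TT[x\!:\!x+m-1,1\!:\!n]}$ (length $nm$), runs any $O(|T|+|P|\log|P|)$ one-dimensional OPPM algorithm on each of the $n-m+1$ bands, and preprocesses $\ser{\PP}$ only once, giving $O(n\cdot nm+m^2\log m)=O(n^2m+m^2\log m)$. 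Your plan is essentially the paper's \emph{other} algorithm (Section~\ref{sec:two dimension}, Theorem~\ref{thm:duel2D}), which the authors themselves only establish with bound $O(n^2m+m^3)$ and explicitly note is not better than the reduction.

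Within your route there are two concrete gaps. First, the claimed $O(m^2\log m)$ construction of the two-dimensional witness table is unsubstantiated: there are $O(m^2)$ offsets, and the Z-array method you invoke must be run once per family of offsets with fixed horizontal shift $a$, each on a serialized string of length $m(m-a)=O(m^2)$, so the total is $O(m^3)$ (this is exactly Lemma~\ref{lemma:WIT 2d}); you give no mechanism for sharing work across the $O(m)$ families. (The final bound would still survive, since $m\le n$ implies $m^3=O(n^2m)$, but you neither prove the $O(m^2\log m)$ claim nor make this fallback observation.) Second, you yourself flag the sweeping analysis as "the main obstacle" and leave the charging argument unproven; this is precisely the part the paper's reduction sidesteps entirely, and the paper's own $O(n^2m)$ sweep (Lemma~\ref{lemma:2d sweep}) is obtained by a per-column one-dimensional reduction rather than by the cross-column reuse you sketch, which runs into the double-checking issue of Fig.~\ref{fig:2-3 cands}(b). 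As written, your proof does not close; the theorem is far more easily obtained by the direct reduction.
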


\begin{proof}
	For a fixed $1 \leq x \leq n -m + 1$, consider the substring $\TT[x:x+m-1, 1:n]$ and let $S_x = \ser{\TT[x:x+m-1, 1:n]}$.
	By Lemma~\ref{lemma:serialized}, $\PP$ occurs in $\TT$ at position $(x, y)$, i.e.
	$\PP \approx \TT[x:x+m-1, y:y+m-1]$ if and only if $\ser{\PP} \approx S_x[m(y-1)+1, m(y-1)+m^2]$.
	The positions $m(y-1)+1$ satisfying the latter condition can be found in $O(nm + m^2\log{m})$ time by 1d-OPPM algorithms, which we showed in Section~\ref{sec:one dimension} or KMP-based ones~\cite{kubica2013linear,kim2014order}, because $|S_x| = nm$ and $|\ser{\PP}|=m^2$.
	Because we need the preprocess for the pattern $\ser{\PP}$ only once, and execute the search in $S_x$ for each $x$, the result follows.
\end{proof}

In the rest of this paper, we try a direct approach to two-dimensional strings based on the duel-and-sweep paradigm, inspired by the work~\cite{amir1992two,Cole2014TwoDimParaMatch}.
A substring of $\TT$ of size $m \times m$ will be referred as a candidate. $\TT_{x,y}$ denotes a candidate with the top-left corner at $(x,y)$. 

\subsection{Pattern preprocessing}
For $0 \leq a < m$ and $-m < b < m$,
we say that a pair $\pair{(i_x, i_y)}{(j_x, j_y)}$ of locations is \emph{a witness pair for the offset $(a, b)$} if either of the following holds:
\begin{itemize}[topsep=3pt]
	\item $\PP[i_x, i_y] = \PP[j] \text{ and } \PP[i_x+a, i_y+b] \ne \PP[j_x, j_y]$,
	\item $\PP[i_x, i_y] > \PP[j] \text{ and } \PP[i_x+a, i_y+b] \le \PP[j_x, j_y]$,
	\item $\PP[i_x, i_y] < \PP[j] \text{ and } \PP[i_x+a, i_y+b] \ge \PP[j_x, j_y]$.
\end{itemize}
The \emph{witness table $\WIT{\PP}$ for pattern $\PP$} is a two-dimensional array of size $m \times (2m-1)$, where
$\WIT{\PP}[a, b]$ is a witness pair for the offset $(a, b)$. 
If the overlap regions are order-isomorphic when $\PP$ is superimposed with offset $(a, b)$, then no witness pair exists.
We denote it as $\WIT{\PP}[a,b] = \pair{(m+1, m+1)}{(m+1, m+1)}$.

We show how to efficiently construct the witness table $\WIT{\PP}$.
For $\PP$ and each $0 \leq a < m$, we define the \emph{Z-array} $\ZZ{\PP}{a}$ by 
\[\ZZ{\PP}{a}[i] = \max_{1 \le j \le |P_1| - i + 1}\{j \mid P_1[1:j]\approx P_2[i:i+j-1] \} \mbox{\  for each } 1 \leq i \leq |P_1|, \]
where $P_1 = \ser{\PP[1\!:\!m-a, 1\!:\!m]}$, $P_2 = \ser{\PP[a + 1\!:\!m, 1\!:\!m]}$, and $|P_1| = |P_2| = m(m-a)$.

\begin{lemma} \label{lemma:Wit-2d, each cell}
	For arbitrarily fixed $a \geq 0$, we can compute the value of $\WIT{\PP}[a,b]$ in $O(1)$ time and for each $b$, assuming that $\ZZ{\PP}{a}$ is already computed.
\end{lemma}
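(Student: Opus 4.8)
The plan is to mimic the one-dimensional construction in Lemma~\ref{lemma:Z_p computation}, reading off the witness pair for offset $(a,b)$ from a single entry of the precomputed array $\ZZ{\PP}{a}$. Fix $a \geq 0$ and recall that $\ZZ{\PP}{a}$ is the (modified) Z-array of the pair $P_1 = \ser{\PP[1\!:\!m-a, 1\!:\!m]}$, $P_2 = \ser{\PP[a+1\!:\!m, 1\!:\!m]}$, both of length $L = m(m-a)$. The first step is to translate the geometric offset $(a,b)$ into a position inside $P_2$: superimposing $\PP$ on itself with offset $(a,b)$ makes the overlap region a subrectangle of both copies, and because serialization proceeds row by row, asking whether this overlap is order-isomorphic is the same as asking how large a prefix of $P_1$ matches $P_2$ starting at the serialized index corresponding to column shift $b$ in the first overlapping row. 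Concretely I would set $k = k(b)$ to be that starting index (for $b \geq 0$ roughly $k = b+1$, for $b < 0$ the shift is applied to the text copy and the roles/formula adjust by $|b|$ and a row offset), so that $\WIT{\PP}[a,b]$ corresponds exactly to the prefix-match information stored in $\ZZ{\PP}{a}[k]$.

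Second, I would case-split on $\ZZ{\PP}{a}[k]$ just as in the 1d proof. If $\ZZ{\PP}{a}[k] = L - k + 1$, the full overlap is order-isomorphic and we output $\WIT{\PP}[a,b] = \pair{(m+1,m+1)}{(m+1,m+1)}$. Otherwise let $j_k = \ZZ{\PP}{a}[k] + 1$ be the first mismatching serialized position; by Lemma~\ref{lem:op} applied to $P_1$ and the shifted copy of $P_2$, with $\imax = \Prev{P_1}[j_k]$ and $\imin = \Next{P_1}[j_k]$, neither~(\ref{eq:opeq}) nor~(\ref{eq:opineq}) holds, and exactly as in conditions~(\ref{cond:eq}) and~(\ref{cond:ineq}) one of $\pair{\imax}{j_k}$ or $\pair{\imin}{j_k}$ is a witnessing pair of serialized indices. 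The final bookkeeping step is to invert the serialization: a serialized index $t$ in $P_1$ corresponds to the 2d location $(\,((t-1) \bmod m) + 1,\ \lfloor (t-1)/m \rfloor + 1\,)$ in $\PP[1\!:\!m-a,1\!:\!m]$, and similarly a serialized index in $P_2$ maps into $\PP[a+1\!:\!m,1\!:\!m]$; composing these two de-serializations with the offset $(a,b)$ yields the two 2d coordinates $(i_x,i_y)$ and $(j_x,j_y)$ of the witness pair. All of this is arithmetic, so given $\ZZ{\PP}{a}$ (and the associated $\Prev{P_1}, \Next{P_1}$ arrays, which are computed once per $a$ alongside it) each $\WIT{\PP}[a,b]$ is produced in $O(1)$ time.

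The main obstacle I anticipate is getting the index translation $b \mapsto k(b)$ exactly right, including the $b<0$ regime and the boundary rows. When $\PP$ is shifted by a positive horizontal offset $b$, the serialized overlap does not start at the very beginning of $P_2$ but $b$ positions in; worse, the overlap rectangle has width $m-|b|$, so within each serialized row there is a "gap" of $|b|$ columns that are not being compared, meaning a raw prefix comparison of $P_1$ against $P_2$ shifted by $k$ is not literally testing the 2d overlap unless one is careful. The clean way around this is to observe that $\ZZ{\PP}{a}$ is defined on the \emph{full-width} serializations $P_1,P_2$ of the $(m-a)\times m$ strips, so a prefix match of length $\ell$ there does correspond to order-isomorphism of an honest prefix-shaped subregion; the horizontal offset $b$ is then absorbed purely as the starting index $k$ into $P_2$ (and symmetrically into $P_1$ when $b<0$), and one must simply verify that the "first mismatch" position $j_k$ recovered this way indeed lands inside the genuine overlap rectangle and not in a wrapped-around column. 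I would handle this by writing $k(b)$ explicitly and checking the two corner cases $b \geq 0$ and $b < 0$ separately, confirming in each that de-serializing $j_k$ and $j_k + (k-1)$-style shifted indices yields coordinates within $[1,m]\times[1,m]$ for both pattern copies; once that is pinned down, Lemma~\ref{lem:op} and the argument of Lemma~\ref{lemma:Z_p computation} transfer verbatim.
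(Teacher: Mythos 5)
Your overall strategy is the paper's: read off one entry of $\ZZ{\PP}{a}$, case-split on whether that entry covers the whole overlap, otherwise use $\Prev{}$/$\Next{}$ at the first mismatch as in Lemma~\ref{lemma:Z_p computation}, and de-serialize back to 2d coordinates. However, the one piece of content this lemma actually requires --- the translation $(a,b)\mapsto k$ into the Z-array --- is wrong as you state it, and you flag it yourself as unresolved. The root cause is a mix-up of the two axes: in the paper's convention $a$ is the \emph{horizontal} offset and is already absorbed into the construction of $P_1=\ser{\PP[1\!:\!m-a,1\!:\!m]}$ and $P_2=\ser{\PP[a+1\!:\!m,1\!:\!m]}$, which are serializations of strips of width $m-a$; the remaining parameter $b$ is the \emph{vertical} offset. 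Since serialization is row-major with rows of length $m-a$, shifting vertically by $b\ge 0$ means skipping exactly $b$ whole rows, so the correct index is $k=b\,(m-a)+1$, not $k=b+1$. With that index, the prefix of $P_1$ of length $(m-a)(m-b)$ is literally $\ser{\PP[1\!:\!m-a,1\!:\!m-b]}$ and $P_2[k\!:\!|P_2|]$ is literally $\ser{\PP[a+1\!:\!m,b+1\!:\!m]}$, so the ``gap of $|b|$ columns inside each row'' that you worry about simply does not occur; your proposed resolution (absorbing a within-row shift into $k$) would compare a prefix of $P_1$ against a region of $P_2$ that wraps across row boundaries and does not correspond to any rectangular overlap.

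Two consequences of the wrong $k$ propagate into your write-up: your Case-1 threshold $L-k+1$ should evaluate to $(m-a)(m-b)$, the number of cells in the overlap (it does only with $k=b(m-a)+1$), and your de-serialization of an index $t$ of $P_1$ must use row length $m-a$, i.e. $(\,((t-1)\bmod (m-a))+1,\ \lfloor (t-1)/(m-a)\rfloor+1\,)$, not $m$. Once these are corrected, the rest of your argument (the $\Prev{P_1}/\Next{P_1}$ case analysis yielding $\pair{\imax}{j_k}$ or $\pair{\imin}{j_k}$, the symmetric treatment of $b<0$ by exchanging the roles of $P_1$ and $P_2$, and the $O(1)$ arithmetic) matches the paper's proof.
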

\begin{proof}	
 For an offset $(a, b)$ with $b \geq 0$, let us consider $z_{a,b} = \ZZ{\PP}{a}[b \cdot (m - a) + 1]$. 
 \begin{description}[topsep=5pt,itemsep=3pt,partopsep=0pt]
 	\item[Case 1] $z_{a,b} = (m-a)\!\cdot\!(m-b)$: Note that the value is equal to the number of elements in the overlap region. Then $\PP[1:m-a,1:m-b] \approx \PP[a+1:m,b+1:m]$, so that no witness pair exists for the offset $(a,b)$.
 	\item[Case 2] $z_{a,b} < (m-a)\!\cdot\!(m-b)$: There exists a witness pair $\pair{(i_x, i_y)}{(j_x, j_y)}$, where $(j_x, j_y)$ is the location of the element in $\PP$, that corresponds to the $(z_{a,b} + 1)$-th element of $P_1 = \ser{\PP[1\!:\!m-a, 1\!:\!m]}$.
 	By a simple calculation, we can obtain the values $(j_x, j_y)$ in $O(1)$ time.
 	We can also compute $(i_x, i_y)$ from $(j_x, j_y)$ in $O(1)$ time, similarly to the proof of Lemma~\ref{lemma:Z_p computation}, with the help of auxiliary arrays $\Prev{\PP,a}$ and $\Next{\PP,a}$. 
(Details are omitted.) 
 \end{description}
Symmetrically, we can compute it for $b<0$.
\end{proof}
 
 \begin{figure}[t]
 	\centering
 	\includegraphics[scale=0.4]{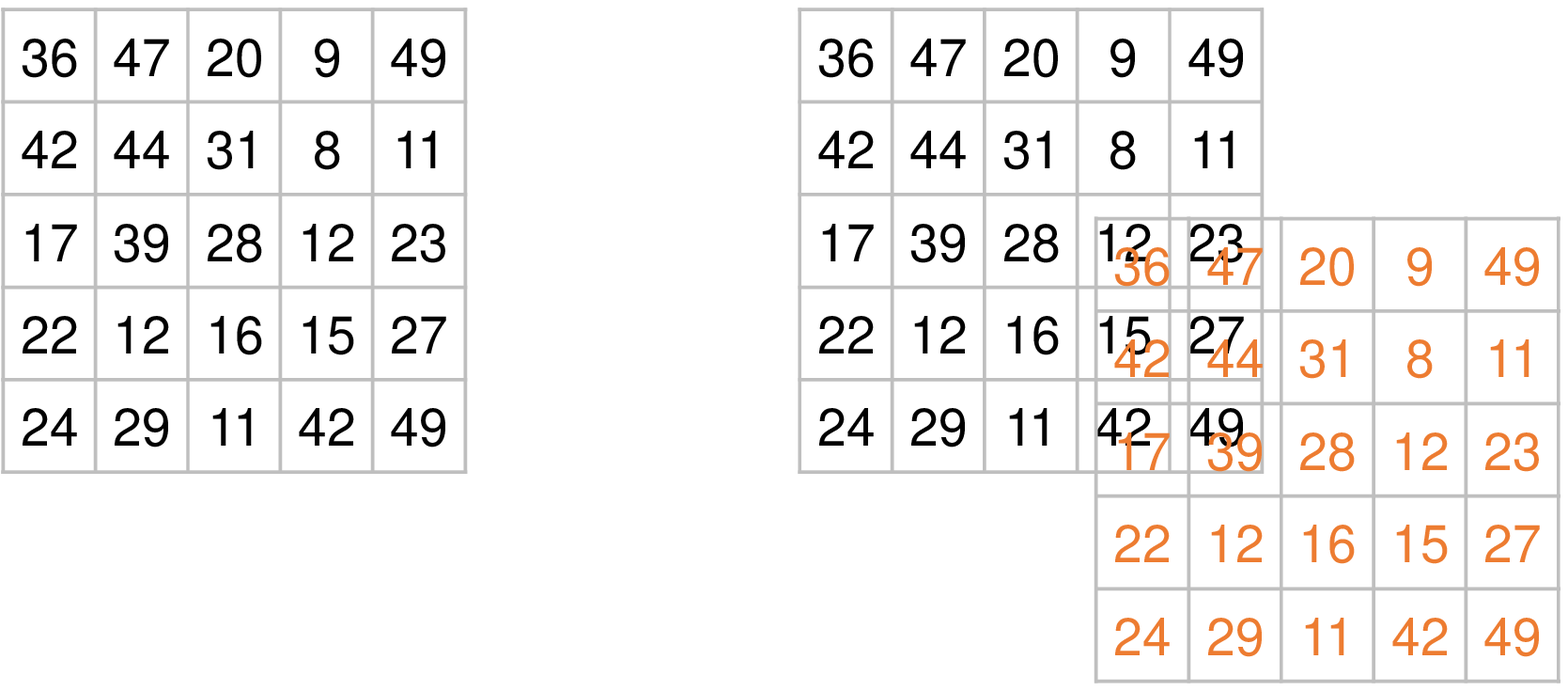}\\
 	\caption{An example of witness pair. The pattern $\PP$ is shown on the left and the alignment of $P$ with itself with offset $(3,2)$ is shown on the right. 
 	The pair $\pair{(2, 1)}{(2, 2)}$ is a witness pair for offset $(3,2)$, since $\PP[2,1] =47 > 44=\PP[2,2]$, but $\PP[5,3]=23 < 27=\PP[5,4]$.}
 	\label{fig:witness-pair}
 \end{figure}

\begin{table}[t]
	\centering
		\caption{Computation of $\ZZ{\PP}{3}$. For $\PP$ in Fig.~\ref{fig:witness-pair}, the overlap regions for offset $(3,0)$ are traversed in left-to-right/top-to-bottom order to obtain $P_1$ and $P_2$.}
		\setlength{\tabcolsep}{8pt} 	\label{table:witness-table-Z-array}
		\begin{tabular}{|c|c|c|c|c|c|c|c|c|c|c|}
			\hline
			          & $1$ & $2$ & $3$ & $4$ & $5$ & $6$ & $7$ & $8$ & $9$ & $10$ \\
			\hline
			$P_1$ & $36$ & $47$ & $42$ & $44$ & $17$ & $39$ & $22$ & $12$ & $24$ & $29$ \\
			\hline
			$P_2$ & $9$ & $49$ & $8$ & $11$ & $12$ & $23$ & $15$ & $27$ & $42$ & $49$ \\
			\hline
			$\ZZ{\PP}{3}$ & $2$ & $1$ & $2$ & $2$ & $3$ & $1$ & $2$ & $2$ & $2$ & $1$ \\
			\hline
		\end{tabular}
	\centering
\end{table}

\begin{table}[t]
	\label{table:witness-table}
	\centering
		\caption{Witness pairs for offsets $(3,0)$, $(3,1)$, $(3,2)$, $(3,3)$, $(3,3)$ for $\PP$ in Fig.~\ref{fig:witness-pair}.}
		\setlength{\tabcolsep}{3pt}
		\begin{tabular}{|c|c|c|c|c|c|}
			\hline
			$(a,b)$ & $(3,0)$ & $(3,1)$ & $(3,2)$ & $(3,3)$ & $(3,4)$\\
			\hline
			$z_{a,b}$ & $2$ & $2$ & $3$ & $2$ & $2$ \\
			\hline
			$\WIT{\PP}[a,b]$ & $\pair{(1,1)}{(2,1)}$ & $\pair{(1,2)}{(2,1)}$ & $\pair{(2,1)}{(2,2)}$ & $\pair{(1,2)}{(2,1)}$ & $\pair{(5,5)}{(5,5)}$ \\
			\hline
		\end{tabular}	
	\centering
\end{table}

\begin{lemma} \label{lemma:WIT 2d}
	We can construct the witness table $\WIT{\PP}$ in $O(m^3)$ time.
\end{lemma}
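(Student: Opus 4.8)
The plan is to construct the witness table $\WIT{\PP}$ by iterating over all $m$ values of the offset parameter $a$ (with $0 \le a < m$), and for each fixed $a$, building the entire row $\WIT{\PP}[a, \cdot]$, which has $2m-1$ entries for $-m < b < m$. The work for a single $a$ splits into two phases: first compute the auxiliary arrays, then fill in the $2m-1$ witness-pair entries. For the second phase, Lemma~\ref{lemma:Wit-2d, each cell} tells us that each entry $\WIT{\PP}[a,b]$ costs $O(1)$ time once $\ZZ{\PP}{a}$ (and the auxiliary arrays $\Prev{\PP,a}$, $\Next{\PP,a}$) are available, so this phase costs $O(m)$ per value of $a$, hence $O(m^2)$ in total — well within the claimed bound.

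The dominant cost is the first phase: for each fixed $a$, computing $\ZZ{\PP}{a}$, which by definition is the Z-array of the string pair $P_1 = \ser{\PP[1\!:\!m-a, 1\!:\!m]}$ and $P_2 = \ser{\PP[a+1\!:\!m, 1\!:\!m]}$, both of length $m(m-a)$. By Lemma~\ref{lemma:Z-array}, a Z-array can be computed in linear time in the string length, i.e. $O(m(m-a)) = O(m^2)$ time, provided the relevant $\Prev{}$ and $\Next{}$ arrays are precomputed; and by Lemma~1, those arrays can be computed in $O(\ser{\cdot} + |\ser{\cdot}|) = O(m^2 \log m)$ time in general, or $O(m^2)$ under the linear-sorting assumption. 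So for each of the $m$ choices of $a$ we spend $O(m^2\log m)$ (or $O(m^2)$) time, giving $O(m^3 \log m)$ (or $O(m^3)$) overall. I would remark that since $\Prev{}$ and $\Next{}$ of the full pattern serialization need only be computed once and the relevant sub-arrays for each $a$ can be derived, or since sorting $m^2$ integers drawn from a polynomially bounded range is $O(m^2)$, the bound is $O(m^3)$; alternatively, one simply notes that the log factor is absorbed because $m^2\log m = O(m^3)$ for the preprocessing of a single serialization whose length is $m^2$. Either way summing over $a$ yields $O(m^3)$, and adding the $O(m^2)$ from the entry-filling phase does not change this.

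Concretely, the steps I would carry out are: (1) state that we loop $a$ from $0$ to $m-1$; (2) for each $a$, form $P_1$ and $P_2$ by serializing the two overlap regions, compute $\Prev{\PP,a}$ and $\Next{\PP,a}$ (the $\Prev{}$/$\Next{}$ arrays of the concatenation, or equivalently of $P_1$ relative to itself and of $P_2$ against $P_1$), invoking Lemma~1; (3) compute $\ZZ{\PP}{a}$ via Lemma~\ref{lemma:Z-array} in time linear in $m(m-a)$; (4) for each $b$ with $0 \le b < m$ set $z_{a,b} = \ZZ{\PP}{a}[b(m-a)+1]$ and apply Lemma~\ref{lemma:Wit-2d, each cell} to fill $\WIT{\PP}[a,b]$ in $O(1)$, and symmetrically handle $-m < b < 0$ (which corresponds to shifting the pattern the other way and can be reduced to the $b \ge 0$ case by swapping the roles of the two copies); (5) sum the costs: $m \cdot O(m^2) = O(m^3)$ for steps (2)–(3), and $O(m^2)$ for step (4), so the total is $O(m^3)$.

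The main obstacle is not really an algorithmic one but a bookkeeping one: making sure that the index arithmetic in the serialization is correct — in particular, that the $(z_{a,b}+1)$-th element of $P_1$ really corresponds under de-serialization to a well-defined cell $(j_x,j_y)$ of $\PP$, that the starting index $b(m-a)+1$ into $\ZZ{\PP}{a}$ indeed aligns $P_1$'s prefix with the portion of $P_2$ offset vertically by $b$ rows, and that the negative-$b$ case is handled by a clean symmetry argument rather than a separate construction. I would keep these details terse (as the paper already does with "Details are omitted"), since the time-complexity accounting — which is all the lemma asks for — follows immediately from Lemmas~1, \ref{lemma:Z-array}, and~\ref{lemma:Wit-2d, each cell} once the loop structure is in place.
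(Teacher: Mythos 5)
Your proposal is correct and follows essentially the same route as the paper: sort the pattern elements once, then for each of the $m$ values of $a$ spend $O(m^2)$ computing $\Prev{\PP,a}$, $\Next{\PP,a}$ and $\ZZ{\PP}{a}$, and $O(m)$ filling the row of witness pairs via Lemma~\ref{lemma:Wit-2d, each cell}. One caution: of the three justifications you offer for absorbing the sorting cost, only the first (sort once globally, then each per-$a$ $\Prev{}$/$\Next{}$ computation is linear given the sorted order --- exactly what the paper does) is sound without extra assumptions; the claim that ``the log factor is absorbed'' is false if you re-sort for every $a$, since $m \cdot O(m^2\log m) = O(m^3\log m)$, and the radix-sort argument needs the polynomially-bounded-alphabet assumption that this lemma does not make.
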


\begin{proof}
	Assume that we sorted all elements of $\PP$.
	For an arbitrarily fixed $a$, calculation of $\Prev{\PP,a}$ and $\Next{\PP,a}$ takes $O(m^2)$ time
	by using sorted $\PP$.
	$\ZZ{\PP}{a}$ can be constructed in $O(m^2)$ time by Lemma~\ref{lemma:Z-array}.
	Furthermore, finding witness pairs for all offsets $(a, b)$ takes $O(m)$ time by Lemma~\ref{lemma:Wit-2d, each cell}.
	Since there are $m$ such $a$'s to consider, $\WIT{\PP}$ can be constructed in $O(m^3)$ time.
\end{proof}

\subsection{Dueling stage}

Similarly to Lemma~\ref{lemma:consistency 1d}, we can show the transitivity as follows.
\begin{lemma}\label{lem:2D-consistency}
	For any $a , b, a', b' \geq 0$, let us consider three candidates $\TT_1 = \TT_{x,y}$, $\TT_2 = \TT_{x + a, y + b}$, and $\TT_3 = \TT_{x + a', y + b'}$. If $\TT_1$ is consistent with $\TT_2$ and $\TT_2$ is consistent with $\TT_3$, then $\TT_1$ is consistent with $\TT_3$.
\end{lemma}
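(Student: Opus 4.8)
The plan is to follow the template of the proof of Lemma~\ref{lemma:consistency 1d}, with the one-dimensional overlap interval replaced by a rectangle. First I would make explicit what consistency means here: for two overlapping candidates whose relative offset is $(a,b)$ with $a,b\geq 0$, the assertion that they are consistent with respect to $\PP$ is exactly
\[
  \PP[1\!:\!m-a,\,1\!:\!m-b]\ \approx\ \PP[a+1\!:\!m,\,b+1\!:\!m],
\]
i.e. the top-left $(m-a)\times(m-b)$ block of $\PP$ is order-isomorphic with the bottom-right block of the same shape (and candidates that do not overlap are consistent by convention). As in the one-dimensional lemma, I will argue in the situation in which $\TT_2$ lies componentwise between $\TT_1$ and $\TT_3$, that is $0\leq a\leq a'<m$ and $0\leq b\leq b'<m$; in that case the offset from $\TT_2$ to $\TT_3$ is $(a'-a,\,b'-b)$ with both components nonnegative, so only overlaps of the form above occur, and any configuration with an empty pairwise overlap is handled by the convention.

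The argument rests on two elementary facts about order-isomorphism, both immediate from the definition (or, if preferred, from Lemma~\ref{lemma:serialized} via serialization): $\approx$ is transitive on arrays indexed by the same set, and if $\ST\approx\TT$ then any pair of corresponding sub-blocks of $\ST$ and $\TT$ is again order-isomorphic. Consider the overlap rectangle of $\TT_1$ and $\TT_3$. Read in the coordinate frame of $\TT_1$ it is the block $G=\PP[a'+1\!:\!m,\,b'+1\!:\!m]$; read in the frame of $\TT_3$ it is $H=\PP[1\!:\!m-a',\,1\!:\!m-b']$; and read in the frame of $\TT_2$ it is the intermediate block $B=\PP[a'-a+1\!:\!m-a,\,b'-b+1\!:\!m-b]$, all three of shape $(m-a')\times(m-b')$. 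By definition, the consistency of $\TT_1$ with $\TT_3$ is precisely $G\approx H$. Since the overlap of $\TT_1$ and $\TT_3$ is contained both in the overlap of $\TT_1$ and $\TT_2$ and in the overlap of $\TT_2$ and $\TT_3$, restricting the order-isomorphism supplied by the consistency of $\TT_1$ with $\TT_2$ to these sub-blocks gives $G\approx B$, and restricting the one supplied by the consistency of $\TT_2$ with $\TT_3$ gives $B\approx H$; transitivity then yields $G\approx H$, as desired.

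I expect the only real work to be the routine index bookkeeping concealed in the second paragraph: checking that $G$, $B$, $H$ indeed have equal shape, that the overlap of $\TT_1$ and $\TT_3$ really is contained in the other two pairwise overlaps under the assumed ordering, and that the three restrictions are taken over matching index ranges. This is the two-dimensional counterpart of the shift-of-indices step in the one-dimensional proof, so no genuinely new idea is needed; concretely I would write out the four corner coordinates of each block once and read off the inclusions. If that arithmetic ever became unwieldy, a cleaner alternative would be to serialize the three overlap rectangles, invoke Lemma~\ref{lemma:serialized}, and reduce the claim to the transitivity of $\approx$ on equal-length one-dimensional strings; but the block argument above is the more transparent route.
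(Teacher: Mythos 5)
Your argument is correct, and it is exactly the two-dimensional analogue of the paper's proof of Lemma~\ref{lemma:consistency 1d}; the paper gives no proof of Lemma~\ref{lem:2D-consistency} beyond the remark that it is shown ``similarly,'' and your restriction-and-transitivity argument on the three readings $G$, $B$, $H$ of the $\TT_1$--$\TT_3$ overlap is precisely what is intended, with the index bookkeeping checking out. One point deserves emphasis: your restriction to the componentwise-between case $a\le a'$, $b\le b'$ is not merely a convenience --- it is necessary, and the residual cases are not all ``handled by the convention'' on empty overlaps as you suggest. As literally stated, with $\TT_3=\TT_{x+a',y+b'}$ and arbitrary $a,b,a',b'\ge 0$, the lemma is false: take $m=2$ with $\PP[1,1]=3$, $\PP[2,1]=4$, $\PP[1,2]=1$, $\PP[2,2]=5$, and the candidates $\TT_1=\TT_{x,y}$, $\TT_2=\TT_{x,y+1}$, $\TT_3=\TT_{x+1,y}$. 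Then $\TT_1$ and $\TT_2$ are consistent (the shared row gives $(3,4)\approx(1,5)$), $\TT_2$ and $\TT_3$ are trivially consistent (their overlap is a single cell), but $\TT_1$ and $\TT_3$ are not (the shared column gives $(3,1)\not\approx(4,5)$). The failure is exactly the one your containment check detects: when $b>b'$ the $\TT_1$--$\TT_3$ overlap is not contained in $\TT_2$, yet the pairwise overlaps need not be empty. By analogy with Lemma~\ref{lemma:consistency 1d}, where the third candidate sits at offset $a+a'$, the intended statement is surely $\TT_3=\TT_{x+a+a',\,y+b+b'}$, which is what you have proved; you should state that reading explicitly rather than presenting it as a case of the general claim.
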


The dueling algorithm due to Amir \etal~\cite{amir1994alphabet} is also applicable to the problem.
\begin{lemma} (\cite{amir1994alphabet}) \label{lemma:duel 2D}
	The dueling stage can be done in $O(n^2)$ time by using $\WIT{\PP}$.
\end{lemma}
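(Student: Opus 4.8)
The plan is to adapt the one-dimensional dueling procedure of Amir~\etal~\cite{amir1994alphabet} to the two-dimensional grid of candidates and argue that transitivity of consistency (Lemma~\ref{lem:2D-consistency}) is exactly the property that makes the linear-scan elimination correct, so that the same complexity bound carries over. First I would set up the candidate grid: candidates $\TT_{x,y}$ are indexed by their top-left corners, so there are at most $(n-m+1)^2 = O(n^2)$ of them. As in the one-dimensional case, the goal of this stage is to eliminate candidates until the surviving ones are pairwise consistent; the witness table $\WIT{\PP}$ lets any two overlapping inconsistent candidates duel in $O(1)$ time (by the same case analysis as Algorithm~\ref{alg:Dueling}, now reading off $\WIT{\PP}[a,b]$ for the relative offset $(a,b)$), with the loser provably not order-isomorphic to $\PP$.

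Next I would describe the elimination order. Process the grid row by row (or column by column); within each row, sweep left to right maintaining at all times a single ``current surviving candidate'' in that row, dueling it against the next candidate that overlaps it and keeping the winner. The key point, exactly parallel to the one-dimensional argument, is that transitivity (Lemma~\ref{lem:2D-consistency}) guarantees that once a candidate survives all duels against the candidates it overlaps along the processed direction, it is consistent with every other surviving candidate in its row; then an analogous pass reconciles candidates across rows. Each duel is $O(1)$, and the total number of duels is proportional to the number of (candidate, overlapping-neighbor) incidences actually examined, which is $O(n^2)$ because each candidate is charged $O(1)$ duels in the sweep. Hence the stage runs in $O(n^2)$ time; I would also note that no sorting or extra preprocessing beyond $\WIT{\PP}$ is needed here, since all comparisons are against fixed pattern positions recorded in the witness table.

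The main obstacle — and the reason this is stated as a lemma rather than a one-line remark — is making the two-dimensional sweep order genuinely correct: in one dimension the candidates sit on a line and ``leftmost overlapping neighbor'' suffices, but on a grid a candidate can overlap neighbors above, below, left, and right, and the offsets $(a,b)$ range over $b\in(-m,m)$ as well as $a\in[0,m)$. I would handle this by invoking Lemma~\ref{lem:2D-consistency} in the form that lets me decompose a general offset into a horizontal and a vertical step, so that reconciling within rows and then between rows is provably enough; the bookkeeping that each eliminated candidate really is not order-isomorphic to $\PP$ then follows from the witness-pair definition exactly as in the text preceding Algorithm~\ref{alg:Dueling}. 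Since the paper already cites~\cite{amir1994alphabet} for precisely this reduction, I expect the write-up to be short: state that the grid-dueling procedure of~\cite{amir1994alphabet} applies verbatim once Lemma~\ref{lem:2D-consistency} is in hand, and conclude the $O(n^2)$ bound.
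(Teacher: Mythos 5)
The paper does not actually prove this lemma --- it is stated with a citation to Amir~\etal~\cite{amir1994alphabet} and no argument of its own --- so your closing move (``the grid-dueling procedure of \cite{amir1994alphabet} applies verbatim'') matches the paper's treatment. However, the algorithmic sketch you give in the middle of the proposal has a genuine gap at exactly the point where the two-dimensional case is hard. Lemma~\ref{lem:2D-consistency} is stated only for offsets $a,b,a',b'\geq 0$, i.e., for triples of candidates that are monotone in both coordinates. Between candidates in different rows the horizontal offset can be negative (the witness table is indexed by $b\in(-m,m)$ for a reason), and consistency is \emph{not} transitive through such non-monotone configurations: if $\TT_2$ is down-and-left of $\TT_1$ and $\TT_3$ is down-and-right of $\TT_2$, the lemma gives you nothing. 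Your proposed fix --- ``decompose a general offset into a horizontal and a vertical step'' --- silently requires a transitivity step through an offset with a negative component, which is precisely what the lemma as stated does not supply. So ``an analogous pass reconciles candidates across rows'' is asserting the hard part rather than proving it.

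A second, smaller issue: ``maintaining a single current surviving candidate'' per row is not a correct description even of one-dimensional dueling. When a newly arrived candidate eliminates the current survivor, it must then be dueled against the \emph{previous} survivors (the standard implementation keeps a stack of pairwise-consistent candidates and pops losers); the $O(1)$-duels-per-candidate accounting survives because every duel eliminates one candidate, but the invariant you need at the end --- that \emph{all} pairs of survivors are consistent, not just the pairs that happened to duel --- is exactly what the stack invariant plus transitivity delivers, and your single-survivor formulation does not. The actual procedure of Amir~\etal\ first duels within columns (where all offsets are vertical and transitivity applies cleanly) and then runs a separate, more delicate cross-column phase; if you want to give more than the citation, that two-phase structure and the treatment of negative horizontal offsets are the parts that must be spelled out.
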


\subsection{Sweeping stage} \label{sec:2DSweeping}
This is the hardest part for two-dimensional strings.
We first consider two surviving candidates $\TT_{x, y_1}$ and $\TT_{x,y_2}$ in some column $x$, with $y_1 < y_2$.
If we traverse $\TT[x\!:\!x+m-1, 1\!:\!n]$ from top-to-bottom/left-to-right manner we can reduce the problem to one-dimensional order-preserving problem.
Thus performing the sweeping stage for some column $x$ will take $O(nm)$ time.
Since there are $n - m -1$ such columns, the sweeping stage will take $O(n^2m)$ time.

\begin{figure}[t]
	\centering
	\begin{minipage}[t]{0.19\hsize}
		\centering
		\includegraphics[scale=0.38]{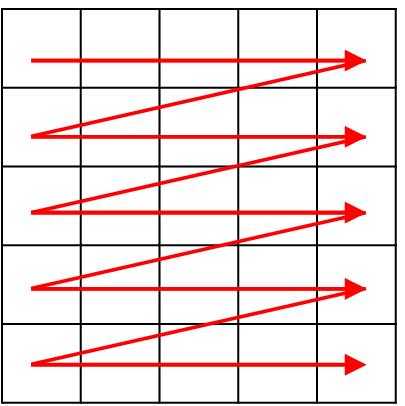}
	\end{minipage}
	\begin{minipage}[t]{0.19\hsize}
		\centering
		\includegraphics[scale=0.38]{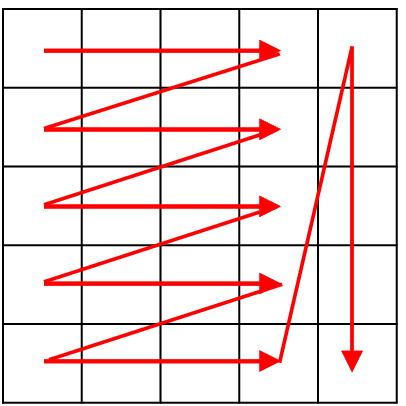}
	\end{minipage}
	\begin{minipage}[t]{0.19\hsize}
		\centering
		\includegraphics[scale=0.38]{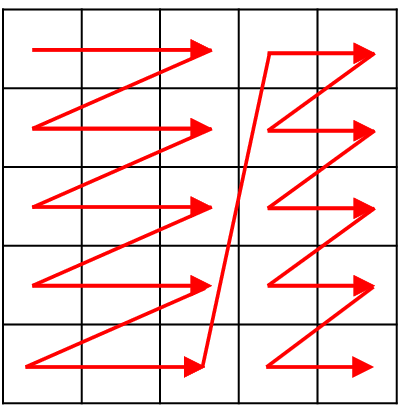}
	\end{minipage}
	\begin{minipage}[t]{0.19\hsize}
		\centering
		\includegraphics[scale=0.38]{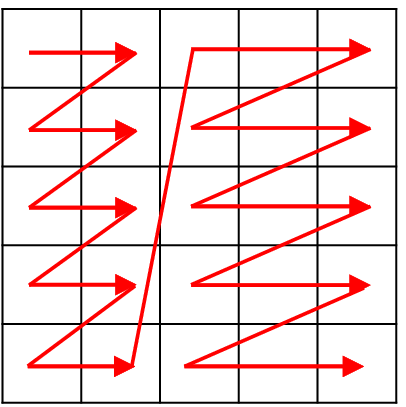}
	\end{minipage}
	\begin{minipage}[t]{0.19\hsize}
		\centering
		\includegraphics[scale=0.38]{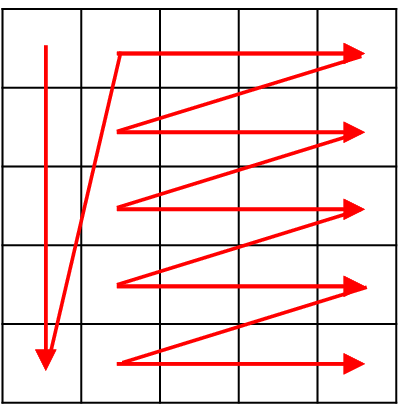}
	\end{minipage}
\caption{Example of traversing directions that we use for sweeping algorithm.}
\label{fig:serial}
\end{figure}

\begin{figure}[t]
	\centering
	\begin{minipage}[t]{0.49\hsize}
		\centering
		\includegraphics[scale=0.4]{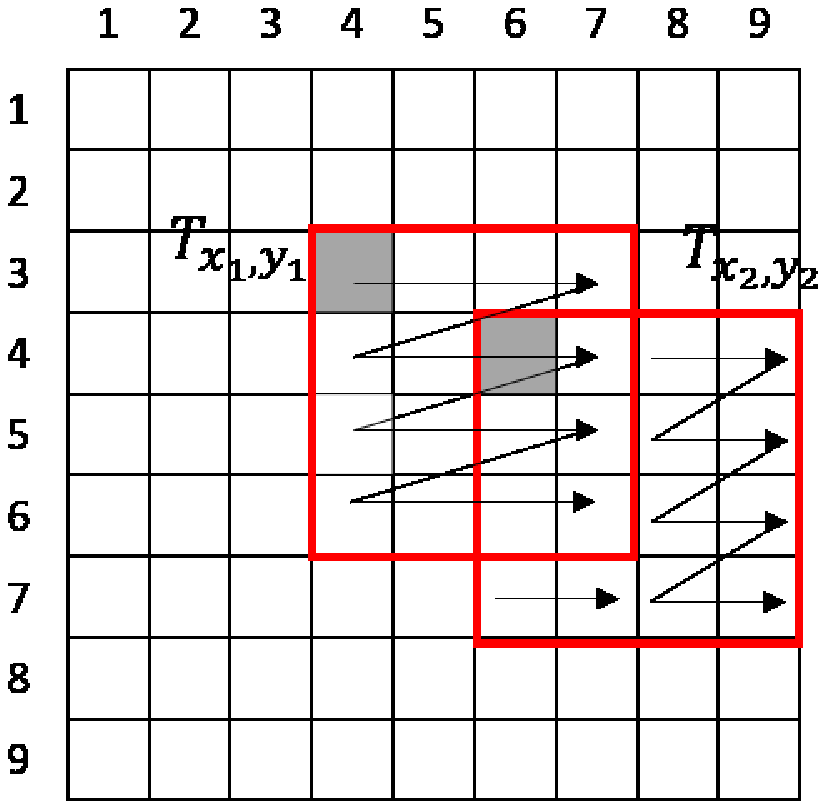}\\
		\ \ \ \scriptsize{(a)}
	\end{minipage}
	\begin{minipage}[t]{0.49\hsize}
		\centering
		\includegraphics[scale=0.4]{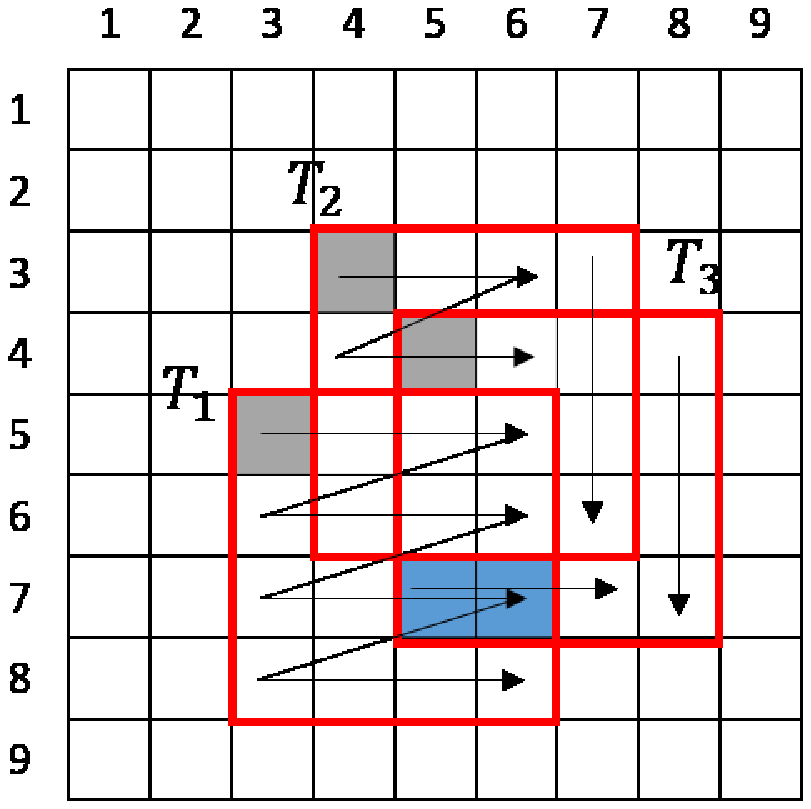}\\
		\ \ \  \scriptsize{(b)}
	\end{minipage}
	\vspace{-2mm}
	\caption{(a) Elements in the overlap region is checked only once. (b) Elements in the blue region must be checked twice.}
	\label{fig:2-3 cands}
	\vspace{-2mm}
\end{figure}

Next, we propose a method that takes advantage of consistency relation in both horizontal and vertical directions.
First, we construct $m$ strings $P_i = \ser{\PP[1:m-i,1:m]}\ser{\PP[m-i+1:m,1:m]}$ for $0 \le i < m$ by serializing $\PP$ in different way. 
We then compute $\Prev{P_i}$ and $\Next{P_i}$ for $0 \le i < m$,
thus we can compare the order-isomorphism of the pattern with the text in several different ways.
$\Prev{P_i}$ and $\Next{P_i}$ for $0 \le i < m$ can be computed in $O(n^3)$ time by sorting $\ser{\PP}$ once
and then calculated $\Prev{P_i}$ and $\Next{P_i}$ by using the sorted $\ser{\PP}$.
Fig.~\ref{fig:serial} shows $P_i$ for $0 \le i < m$ where $m=5$.
We also do the same computation for bottom-to-top/left-to-right traversing direction.

Let us consider two overlapping candidates $\TT_{x_1, y_1}$ and $\TT_{x_2, y_2}$, where $x_1 < x_2$ and $y_1 < y_2$. 
Suppose that $\TT_{x_1, y_1}$ is order-isomorphic with the pattern and we need to check $\TT_{x_2, y_2}$. Since $\TT_{x_1, y_1}$ is consistent with $\TT_{x_2, y_2}$, we need to check the order-isomorphishm of the region of $\TT_{x_2, y_2}$ that is not an overlap region. 
We do this by using $P_j$, where $j= x_2-x_1$, without checking the overlap region.
This idea is illustrated in Figure~\ref{fig:2-3 cands} (a).
The procedure for $y_1 > y_2$ is symmetrical.

Next, consider three overlapping candidates $\TT_1 = \TT_{x_1,y_1}$, $\TT_2 = \TT_{x_2,y_2}$ and $\TT_3 = \TT_{x_3,y_3}$,
such that $x_1 \leq x_2 \leq x_3$ and $y_2 \leq y_3$.
We assume that $\TT_1$ and $\TT_2$ are both order-isomorphic with the pattern.
If $y_1 \leq y_2$, we can use the method for two overlapping candidates that we described before to perform sweeping efficiently.
However, if $y_1 \geq y_2$, as showed in Fig.~\ref{fig:2-3 cands} (b),
we need to check the blue region twice since we do not know the order-isomorphism relation
between the blue region with the overlap region of $\TT_2$ and $\TT_3$.

By using the above method, we can reduce the number of comparisons for sweep stage.
However, the time complexity remains the same.
\begin{lemma} \label{lemma:2d sweep}
	The sweeping stage can be completed in $O(n^2m)$ time.
\end{lemma}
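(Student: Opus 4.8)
The plan is to bound the cost of the sweeping stage column by column. Fix a column index $x$ with $1 \le x \le n-m+1$ and let $C_x$ be the set of surviving candidates whose top-left corner lies in column $x$. After the dueling stage (Lemma~\ref{lemma:duel 2D}) all surviving candidates are pairwise consistent, so in particular the members of $C_x$ are pairwise consistent. I would first serialize the vertical strip $\TT[x\!:\!x+m-1,\,1\!:\!n]$ in the same traversal order used to define $\ser{\PP}$, obtaining a one-dimensional string $S_x$ with $|S_x| = nm$; each candidate $\TT_{x,y} \in C_x$ then corresponds to the length-$m^2$ window of $S_x$ starting at position $(y-1)m+1$, and $\TT_{x,y} \approx \PP$ iff that window is order-isomorphic with $\ser{\PP}$ (Lemma~\ref{lemma:serialized}). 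A short computation shows that 2d-consistency of two candidates in $C_x$ is equivalent to 1d-consistency of their serialized windows (again via Lemma~\ref{lemma:serialized}, applied to the relevant sub-blocks of $\PP$), so $C_x$ is exactly a pairwise-consistent candidate set for $\ser{\PP}$ inside $S_x$. Running the one-dimensional sweeping procedure of Section~\ref{sec:1DSweeping} on $S_x$, using the precomputed $\Prev{\ser{\PP}}$ and $\Next{\ser{\PP}}$, costs $O(nm)$ by Lemma~\ref{lemma:Sweep 1D} (with "text length" $nm$). Summing over the $O(n)$ relevant columns yields $O(n^2m)$, and the one-time preprocessing of $\ser{\PP}$ is absorbed; this already proves the lemma.

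Next I would argue that the improved scheme described in this section does not change the bound, so that the comparison-count savings come for free. The improved scheme replaces the independent per-column runs by also exploiting horizontal consistency: when a candidate $\TT_{x_1,y_1}$ has been verified order-isomorphic and an overlapping candidate $\TT_{x_2,y_2}$ with $x_1<x_2$ must be checked, one restarts the check of $\TT_{x_2,y_2}$ from the first position outside the overlap using the serialization $P_j$ with $j = x_2 - x_1$ (and the symmetric bottom-to-top variant when $y_1 > y_2$). Combined with the within-column amortization this makes each position of each strip examined $O(1)$ times on average; the only exception is the "blue region" of Fig.~\ref{fig:2-3 cands}(b) arising for a triple $\TT_1,\TT_2,\TT_3$ with $x_1\le x_2\le x_3$, $y_2\le y_3$, $y_1\ge y_2$, which is re-scanned once but has size $O(m^2)$ and can be charged to $\TT_3$. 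Hence the per-column cost stays $O(nm)$ and the total stays $O(n^2m)$.

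The main obstacle I expect is the cross-column bookkeeping in the second part: pinning down exactly which region of $\TT_{x_2,y_2}$ is outside the overlap, verifying that the restart position supplied by $P_j$ agrees with the mismatch position returned when a candidate fails (so that no position before the restart is ever re-examined along the "good" path), and proving that the doubly-scanned blue regions can be injectively charged to distinct candidates. The within-column argument is a direct instance of Lemma~\ref{lemma:Sweep 1D} and the equivalence of 1d/2d consistency is routine; the cross-column accounting is where care is needed, but since we only claim the bound $O(n^2m)$ — not an improvement in the exponent — even the crude estimate "$O(nm)$ per column by rerunning the 1d sweep" suffices, and the improved scheme is included only to reduce the number of comparisons.
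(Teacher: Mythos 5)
Your proposal is correct and follows essentially the same route as the paper: the paper likewise establishes the $O(n^2m)$ bound by serializing each vertical strip $\TT[x\!:\!x+m-1,1\!:\!n]$, running the one-dimensional sweep at cost $O(nm)$ per column, and summing over the $O(n)$ columns, while noting that the refined scheme using the serializations $P_j$ only reduces the number of comparisons without changing the asymptotic bound. Your additional remarks on cross-column bookkeeping and the doubly-scanned region are consistent with the paper's discussion but are not needed for the stated bound, exactly as you observe.
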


By Lemmas~\ref{lemma:WIT 2d}, \ref{lemma:duel 2D}, and \ref{lemma:2d sweep}, we conclude this section as follows.
\begin{theorem} \label{thm:duel2D}
	The duel-and-sweep algorithm solves 2d-OPPM Problem in $O(n^2 m + m^3)$ time.
\end{theorem}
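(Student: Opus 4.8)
The plan is to assemble the three stages of the duel-and-sweep algorithm and to argue correctness and running time separately, since the theorem is essentially a tally of the preceding lemmas together with a correctness argument for the paradigm. For correctness, I would first note that the witness table $\WIT{\PP}$ built in preprocessing is correct by construction: whenever the overlap of two copies of $\PP$ at offset $(a,b)$ fails to be order-isomorphic, $\WIT{\PP}[a,b]$ records a genuine witness pair, and otherwise it stores the sentinel $\pair{(m+1,m+1)}{(m+1,m+1)}$; this is exactly what Lemmas~\ref{lemma:Wit-2d, each cell} and~\ref{lemma:WIT 2d} provide. Next I would invoke the dueling procedure of Amir~\etal\ (Lemma~\ref{lemma:duel 2D}): each duel between two overlapping candidates eliminates at least one candidate that cannot be order-isomorphic with $\PP$, and by transitivity of the consistency relation (Lemma~\ref{lem:2D-consistency}) the set of candidates surviving all duels is pairwise consistent. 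Pairwise consistency is the invariant that the sweeping stage exploits.

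For the sweeping stage I would argue that testing a single candidate $\TT_{x,y}$ against $\PP$ reduces, via serialization and Lemma~\ref{lemma:serialized}, to the one-dimensional order-isomorphism test of Lemma~\ref{lem:op}, and that when two overlapping survivors $\TT_{x_1,y_1}$ and $\TT_{x_2,y_2}$ are consistent, the overlap region need not be re-examined: one uses the appropriate serialization $P_j$ with $j = x_2 - x_1$ together with the precomputed $\Prev{P_j}$ and $\Next{P_j}$ arrays to resume the check from the boundary of the overlap. This is the content of Lemma~\ref{lemma:2d sweep}, which bounds the sweeping work by $O(n^2 m)$ — essentially $O(nm)$ per column over $O(n)$ columns — with the three-candidate configuration of Fig.~\ref{fig:2-3 cands}(b) causing at most a constant-factor blow-up, since the doubly-checked (blue) region is re-examined only once.

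Finally I would sum the costs: preprocessing the pattern costs $O(m^3)$ by Lemma~\ref{lemma:WIT 2d}, the dueling stage costs $O(n^2)$ by Lemma~\ref{lemma:duel 2D}, and the sweeping stage costs $O(n^2 m)$ by Lemma~\ref{lemma:2d sweep}; since $n \geq m$ we may absorb the $O(n^2)$ term, giving a total of $O(n^2 m + m^3)$, as claimed.

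I expect the main obstacle to lie in the sweeping stage rather than in this final tally: one must verify that the consistency invariant really does let every overlap region be skipped, or re-checked only a bounded number of times, even when three or more candidates mutually overlap in the plane, and that the family of serializations $P_0,\ldots,P_{m-1}$ together with their bottom-to-top counterparts suffices to perform every resumed order-isomorphism check in amortized constant time per matrix entry. Once Lemma~\ref{lemma:2d sweep} is established, the theorem itself is immediate.
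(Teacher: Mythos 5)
Your proposal matches the paper's argument: the theorem is obtained exactly by summing the costs of pattern preprocessing (Lemma~\ref{lemma:WIT 2d}, $O(m^3)$), dueling (Lemma~\ref{lemma:duel 2D}, $O(n^2)$), and sweeping (Lemma~\ref{lemma:2d sweep}, $O(n^2m)$), with the $O(n^2)$ term absorbed. Your added remarks on correctness and on the sweeping stage being the delicate part go slightly beyond what the paper writes, but are consistent with it.
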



\section{Discussion} \label{sec:conclusion}
In the current status,
the time complexity of duel-and-sweep algorithm for 2d-OPPM problem in Theorem~\ref{thm:duel2D} is not better than straightforward reduction to 1d-OPPM problem explained in Theorem~\ref{thm:2dOPPM}.
We showed this result as a preliminary work on solving 2d-OPPM,
and we hope the 2d-OPPM can be solved more efficiently
by finding more sophisticated method based on some unknown combinatorial properties,
as Cole \etal~\cite{Cole2014TwoDimParaMatch} did for two dimensional parameterized matching problem. 
This is left for future work.

\bibliographystyle{abbrv}
\bibliography{ref}

\begin{thebibliography}{10}

\bibitem{amir1994alphabet}
A.~Amir, G.~Benson, and M.~Farach.
\newblock An alphabet independent approach to two-dimensional pattern matching.
\newblock {\em SIAM Journal on Computing}, 23(2):313--323, 1994.

\bibitem{amir1992two}
A.~Amir and M.~Farach.
\newblock Two-dimensional dictionary matching.
\newblock {\em Information Processing Letters}, 44(5):233--239, 1992.

\bibitem{baker1978technique}
T.~P. Baker.
\newblock A technique for extending rapid exact-match string matching to arrays
  of more than one dimension.
\newblock {\em SIAM Journal on Computing}, 7(4):533--541, 1978.

\bibitem{bird1977two}
R.~S. Bird.
\newblock Two dimensional pattern matching.
\newblock {\em Information Processing Letters}, 6(5):168--170, 1977.

\bibitem{ref:PSC_Cantone}
D.~Cantone, S.~Faro, and M.~O. K{\"{u}}lekci.
\newblock An efficient skip-search approach to the order-preserving pattern
  matching problem.
\newblock In {\em PSC}, pages 22--35, 2015.

\bibitem{ref:PSC_Chhabra}
T.~Chhabra, M.~O. K{\"{u}}lekci, and J.~Tarhio.
\newblock Alternative algorithms for order-preserving matching.
\newblock In {\em PSC}, pages 36--46, 2015.

\bibitem{SEA_Chhabra_2014}
T.~Chhabra and J.~Tarhio.
\newblock Order-preserving matching with filtration.
\newblock In {\em SEA}, pages 307--314, 2014.

\bibitem{cho2015fast}
S.~Cho, J.~C. Na, K.~Park, and J.~S. Sim.
\newblock A fast algorithm for order-preserving pattern matching.
\newblock {\em Information Processing Letters}, 115(2):397--402, 2015.

\bibitem{Cole2014TwoDimParaMatch}
R.~Cole, C.~Hazay, M.~Lewenstein, and D.~Tsur.
\newblock Two-dimensional parameterized matching.
\newblock {\em ACM Trans. Algorithms}, 11(2):12:1--12:30, Oct. 2014.

\bibitem{SPIRE_Crochemore_2013}
M.~Crochemore, C.~S. Iliopoulos, T.~Kociumaka, M.~Kubica, A.~Langiu, S.~P.
  Pissis, J.~Radoszewski, W.~Rytter, and T.~Walen.
\newblock Order-preserving incomplete suffix trees and order-preserving
  indexes.
\newblock In {\em SPIRE}, pages 84--95, 2013.

\bibitem{faro2016efficient}
S.~Faro and M.~O. K{\"u}lekci.
\newblock Efficient algorithms for the order preserving pattern matching
  problem.
\newblock In {\em International Conference on Algorithmic Applications in
  Management}, pages 185--196. Springer, 2016.

\bibitem{gusfield1997algorithms}
D.~Gusfield.
\newblock {\em Algorithms on strings, trees and sequences: computer science and
  computational biology}.
\newblock Cambridge university press, 1997.

\bibitem{hasan2015order}
M.~M. Hasan, A.~S. Islam, M.~S. Rahman, and M.~S. Rahman.
\newblock Order preserving pattern matching revisited.
\newblock {\em Pattern Recognition Letters}, 55:15--21, 2015.

\bibitem{kim2014order}
J.~Kim, P.~Eades, R.~Fleischer, S.-H. Hong, C.~S. Iliopoulos, K.~Park, S.~J.
  Puglisi, and T.~Tokuyama.
\newblock Order-preserving matching.
\newblock {\em Theoretical Computer Science}, 525:68--79, 2014.

\bibitem{kubica2013linear}
M.~Kubica, T.~Kulczy{\'n}ski, J.~Radoszewski, W.~Rytter, and T.~Wale{\'n}.
\newblock A linear time algorithm for consecutive permutation pattern matching.
\newblock {\em Information Processing Letters}, 113(12):430--433, 2013.

\bibitem{ueki2016fast}
Y.~Ueki, K.~Narisawa, and A.~Shinohara.
\newblock A fast order-preserving matching with q-neighborhood filtration using
  simd instructions.
\newblock In {\em SOFSEM (Student Research Forum Papers/Posters)}, pages
  108--115, 2016.

\bibitem{vishkin1985optimal}
U.~Vishkin.
\newblock Optimal parallel pattern matching in strings.
\newblock {\em Information and control}, 67(1-3):91--113, 1985.

\end{thebibliography}

\end{document}